\newcommand{\noindents}[2]{\vspace{-#1mm}\noindent {#2}}
\newcommand{\tw}{{\mathbf{tw}}}
\newtheorem{corollary}{Corollary}
\newtheorem{theorem}{Theorem}
\newtheorem{lemma}{Lemma}
\newtheorem{definition}{Definition}
\newtheorem{proposition}{Proposition}
\newcommand{\bb}{{\cal B}}
\begin{document}
\thanksmarkseries{arabic}

\title{Bidimensionality of Geometric Intersection Graphs\thanks{Emails:  Alexander Grigoriev: {\sf a.grigoriev@maastrichtuniversity.nl},
Athanassios Koutsonas: {\sf akoutson@math.uoa.gr}, 
Dimitrios  M. Thilikos: {\sf sedthilk@thilikos.info}}}

\author{Alexander Grigoriev\thanks{School of Business and Economics Department of Quantitative Economics, Maastricht University, Maastricht.}
\and Athanassios Koutsonas\thanks{Department of Mathematics, National and Kapodistrian University of Athens, Athens.} 
\and  Dimitrios M. Thilikos$^{3,}$\thanks{AlGCo project-team, CNRS, LIRMM.}\hspace{1.4mm}$^{,}$\hspace{.2mm}\thanks{Co-financed by the European Union (European Social Fund -- ESF) and
Greek national funds through the Operational Program ``Education and Lifelong Learning'' of the
National Strategic Reference Framework (NSRF) - Research Funding Program:
``{\sl Thalis. Investing in knowledge society through the European Social Fund}''.}}
\date{\today}

\maketitle

\begin{abstract}
\noindent Let ${\cal B}$ be a finite collection of geometric (not necessarily convex) bodies in the plane. Clearly, this class of geometric objects naturally generalizes the class of disks, lines, ellipsoids, and even convex polygons. We consider geometric intersection graphs $G_{\cal B}$ where each body of the collection ${\cal B}$ is represented by a vertex, and two vertices of $G_{\cal B}$ are adjacent if the intersection of the corresponding bodies is non-empty. 
For such graph classes and under natural restrictions 
on their maximum degree or subgraph exclusion,
we prove that the relation between their treewidth and the maximum size of a grid minor
is linear.
These combinatorial results vastly  extend the applicability of
all the  meta-algorithmic results of the bidimensionality theory to geometrically defined graph classes.

\end{abstract}

\noindent {\bf Keywords:} Geometric intersection graphs, Grid exlusion theorem, Bidimensionality

\section{Introduction}\label{intro}

Parameterized complexity treats problems as subsets of ${\rm \Sigma}^{*}\times \Bbb{N}$, for some alphabet ${\rm \Sigma}$. An instance of a parameterized problem is a pair $(I,k)$ where $I$ is the main part of the problem description and $k$ is a, typically small, parameter.  An {\em {\sf FPT} algorithm} for a parameterized problem ${\rm {\rm \Pi}}$ is one that runs in $f(k)\cdot n^{O(1)}$ time. A central issue in parameterized complexity is to find which parameterized problems admit {\sf FPT} algorithms and, when this is the case, to reduce as much as possible the contribution of the function $f(\cdot)$, i.e., their {\em parametric dependance}. {\sf FPT} algorithms where $f(k)=2^{o(k)}$ are called {\em sub-exponential parameterized algorithms}. It is known that such an algorithm where  $f(k)=2^{o(\sqrt{k})}$ is unlikely to exist for several problems on graphs, even when restricted to sparse graph classes such as planar graphs~\cite{CaiJ03onth}. Therefore, a parametric dependance $f(k)=2^{O(\sqrt{k})}$ is the best we may expect and this is what we may aim for.

A kernelization algorithm for a parameterized problem ${\rm {\rm \Pi}}$ is one that, in polynomial time, can
replace any instance $(I,k)$ with a new equivalent one whose size depends exclusively on the parameter $k$. If such an algorithm exists and the size of the new instance is linear in $k$, then we say that ${\rm {\rm \Pi}}$ {\em admits a linear kernel}. While the existence of an {\sf FTP} algorithm implies the existence of a kernel it is a challenge to find for which problems such a kernel can be polynomial~\cite{BodlaenderDFH09onpr}.

\paragraph{\bf Bidimensionality theory.} This theory
was initially introduced in~\cite{DemaineFHT05sube}
as a general framework for designing parameterized algorithms  with sub-exponential parametric dependance.
Moreover, it also provided meta-algo\-rithmic results 
in approximation algorithms~\cite{DemaineHaj05bidi,FominLRS11bidi} and 
kernelization~\cite{FominLST10Bidi}  (for  a survey on bidimensionality, see~\cite{DemaineH07-CJ}).
To present the consequences and the motivation of our results let us first give some brief description of 
the meta-algorithmic consequences of Bidimensionality 
theory. 
For this, we need first some definitions.

A {\em graph invariant} is a function ${\bf p}$ mapping graphs to non-negative 
integers. The parameterized problem associated to an invariant  ${\bf p}$
has as input a pair $(G,k)$ where $G$ is a graph and $k$ is a non-negative integer, and 
asks whether ${\bf p}(G)\leq k$ (or, alternatively whether ${\bf p}(G)\geq k$).
Let ${\cal G}$ be the set of all graphs. 
The parameterized problem corresponding to ${\bf p}$ is denoted by ${\rm {\rm \Pi}}_{\bf p}\subseteq 
{\cal G}\times \Bbb{N}$ and is defined as ${\rm {\rm \Pi}}_{\bf p}=\{(G,k)\mid {\bf p}(G)\leq k\}$ or, alternatively, as
${\rm {\rm \Pi}}_{\bf p}=\{(G,k)\mid {\bf p}(G)\geq k\}$.
We also define the graph invariant ${\bf bg}$ such that given a graph $G$,  
$${\bf bg}(G)=\max\{k\mid G\mbox{~contains the $(k\times k)$-grid as a minor}\}.$$
\begin{definition}
Given a {\em graph invariant} ${\bf p}$ we say that ${\rm \Pi}_{\bf p}$ is {\em minor-bidimensional}
if the following conditions hold:
\begin{itemize}
\item ${\bf p}$ is closed under taking of subgraphs, i.e., for every $G\in{\cal G}$, if $H$ is a minor of $G$, then ${\bf p}(H)\leq {\bf p}(G)$.
\item If $L_{k}$ is the $(k\times k)$-grid, then ${\bf p}(L_{k})={\rm \Omega}(k^2)$.
\end{itemize}
\end{definition}

\noindent The main consequences of bidimensionality theory for minor closed invariants are summarized by the following:
Suppose that ${\bf p}$ is a graph invariant such that ${\rm \Pi}_{\bf p}$ 
is a minor bidimensional problem. 
Let also ${\cal G}$ be a graph class
such that it satisfies the following property:
\begin{eqnarray}
\forall_{G\in {\cal G}}\ \tw(G) & =& O({\bf bg}(G))\label{gjjfkr43d}
\end{eqnarray}
and let $\Pi_{\bf p}^{\cal G}$ be the restriction of ${\rm \Pi}_{\bf p}$ to 
the graphs in ${\cal G}$, i.e. the  problem occurring if we alter all YES-instance of
$\Pi_{\bf p}^{\cal G}$ whose graph is  not in ${\cal G}$ to NO-instances. Then the following hold

\begin{enumerate}
\item if ${\bf p}(G)$ can be computed in $2^{O(\tw(G))}\cdot n^{O(1)}$ steps, then 
${\rm \Pi}_{\bf p}^{\cal G}$ can be solved by a sub-exponential parameterized algorithm that runs in $2^{O(\sqrt{k})}\cdot n^{O(1)}$ steps. 
\item if ${\bf p}$  satisfies some separability property (see~\cite{FominLRS11bidi,DemaineHaj05bidi,FominLST10Bidi} for the precise definition) and ${\rm \Pi}_{\bf p}=\{(G,k)\mid \exists S\subseteq V(G): 
|S|\geq k\mbox{~and~} (G,S)\models \psi\}$ where $\psi$  is a sentence in Counting Monadic Second Order logic, then  ${\rm \Pi}_{\rm p}$ admits a linear kernel, i.e. there exists a polynomial algorithm reducing 
each instance $(G,k)$ of  ${\rm \Pi}_{\rm p}$ to an equivalence instance $(G',k')$ where $|V(G')|=O(k)$
and $k'\leq k$.
\item If ${\bf p}$ satisfies some separability property and is reducible (in the sense this is defined in~\cite{FominLRS11bidi}), then there is an EPTAS for computing ${\bf p}(G)$ on the graphs in ${\cal G}$.
\end{enumerate}

\noindent  According the the current state of the art all above meta-algorithmic results hold when  ${\cal G}$
excludes graphs with some fixed  graph $H$ as a minor. This is due to the combinatorial result of 
Demaine and Hajiaghayi  in~\cite{DemaineH08line}, who proved~\eqref{gjjfkr43d} 
for every graph $G$ excluding some fixed graph $H$ as a minor.
While such graphs are of a topological nature
it remained an interesting question whether the applicability 
of the above theory can be extended for geometrically (rather than topologically)
restricted graphs classes. 

%

\paragraph{\bf Our results.} Clearly, any extension of the applicability of bidimensionality theory 
on some class ${\cal G}$ requires a proof that it satisfies  property~\eqref{gjjfkr43d}.
Recently, a first step to extend meta-algorithmic results for graph classes that are not topologically restricted was done in~\cite{FominLS12bidi}, where the bidimensionality condition was used to derive sub-exponential algorithms for $H$-free unit-disk intersection graphs and $H$-free map graphs, where a graph class is $H$-free if none of its graphs contains $H$ as a subgraph. However, no meta-algorithmic results were known so far for more generic classes of geometric intersection graphs, like e.g. intersection graphs of polygonal objects in the plane. 
In this paper we vastly extend the combinatorial results of~\cite{FominLS12bidi}
to more general families of geometric intersection graphs. In particular, we prove that 
property~\eqref{gjjfkr43d} holds for several classes 
of geometric intersection graphs and open 
a new direction of the applicability of bidimensionality theory. 
In particular our results are the following.

\begin{itemize}
\item[1.] Let ${\cal B}$ be a set of (not necessarily straight) lines in the plane such that for each $C_{1},C_{2}\in {\cal B}$ with $C_{1}\neq C_{2}$, the set $C_{1}\cap C_{2}$ is a finite set of points and at most two lines intersect in the same point. 
Assume also that each line is intersected at most $\xi$ times. Then $\tw(G_{\cal B})=O(\xi\cdot {\bf bg}(G_{\cal B}))$.\\


\item[2.] Let ${\cal B}$ be a set of $\rho$-convex bodies (bodies where any two of their points can be joined by a polysegment of at most $\rho-1$ bends that is entirely inside the body) such that for each $B_{1},B_{2}\in {\cal B}$ with $B_{1}\neq B_{2}$, the set $B_{1}\cap B_{2}$ has a non-empty interior. Let $G_{\cal B}$ be the intersection graph of ${\cal B}$ and let $\Delta$ be the maximum degree of $G_{\cal B}$. Then $\tw(G_{\cal B})=O(\rho^2\Delta^3\cdot {\bf bg}(G_{\cal B}))$.\\

\item[3.] Let $H$ be a graph on $h$ vertices, and let $\bb$ be a collection of convex bodies in the plane such that for each $B_{1},B_{2}\in {\cal B}$ with $B_{1}\neq B_{2}$, the set $B_{1}\cap B_{2}$ has a non-empty interior. If the intersection graph $G_{\bb}$ of $\bb$ is $\alpha$-fat and does not contain $H$ as a subgraph, then $\tw(G_{\cal B})=O(\alpha^6 h^3\cdot {\bf bg}(G_{\cal B}))$. (Given a real number $\alpha $, we call the intersection graph of a collection of convex bodies
{\em $\alpha $-fat} if the ratio between the maximum and the minimum radius of a circle 
where all bodies of the collection can be circumscribed, and inscribed
respectively,  is upper bounded by $\alpha$.)
\end{itemize}

\noindent  Notice that the case of $H$-subgraph free unit-disk intersection graphs treated in~\cite{FominLS12bidi} is just a very special case of the fourth result (unit-disk graphs
are 1-convex and $1$-fat).\\

The paper is organized as follows: In Section~\ref{defprel},
we give some basic definitions and results.
In Section~\ref{ldfmnfd} we prove the main technical results that are used
in Section~\ref{kfofhfm} for the derivation of its implications in a variety of geometric graph classes.
Section~\ref{kdhufkl} discusses extensions and conclusions of this work.
Al proofs have been moved to the appendix except from those of Lemmata 6, 7 and Theorem~1.

\section{Definitions and preliminaries}\label{defprel}

All graphs in this paper are undirected and may have loops or multiple edges. If a graph has no multiple edges or loops we call it {\em simple}. Given a graph $G$, we denote by $V(G)$ its vertex set and by $E(G)$ its edge set. Let $x$ be a vertex or an edge of a graph $G$ and likewise for $y$; their distance in $G$, denoted by ${\bf dist}_{G}(x,y)$ is the smallest length of a path in $G$ that contains them both. We call {\em part of a path} any sequence of adjacent edges in a given path. For any set of vertices $S\subseteq V(G)$, we denote by $G[S]$ the subgraph of $G$ induced by the vertices from $S$.

\noindents{-2}{\bf  Graph embeddings}.
 We use the term {\em graph embedding} to denote a drawing of a graph $G$ in the plane, where each vertex is associated to a distinct point of the plane and each edge to a simple open Jordan curve, such that its endpoints are the two points of the plane associated with the endvertices of this edge. To simplify the presentation, when not necessary, we do
not distinguish between a vertex of $G$ and the point in the plane representing the vertex; likewise for an edge of $G$. Roughly speaking, we often do not distinguish between $G$ and its embedding.
Two edges of an embedding of a graph in the plane {\em cross}, if they share a non-vertex point of the plane.
We use the term {\em plane graph} for an embedding of a graph
without crossings.
A graph is {\em planar} if it admits a plane embedding.

\noindents{-2}{\bf Geometric bodies, lines and polysegments}. We call a set of points in the plane a {\em 2-dimensional geometric body}, or simply a {\em 2-dimensional body}, if it is homeomorphic to the closed disk $\{ (x,y)|\ x^{2}+y^{2}\leq 1\}$. Also a {\em line} is 
a subset of the plane that is homeomorphic to the interval $[0,1]$.
 A {\em polysegment} $C$ is a line that is the union of a sequence of straight lines
$\overline{p_1p_2},\overline{p_2p_3},\cdots,\overline{p_{k-1}p_k}$ in the plane, where $p_1$ and $p_k$ are the endpoints of $C$. We say that a polysegment $C$ {\em contains} a point $p_i$ and {\em joins} the endpoints $p_1,\ p_k$, and we refer to the rest points $p_2,p_3,\cdots,p_{k-1}$ as {\em bend points} of $C$. The length of a polysegment is defined as equal to the number of straight lines it contains (i.e. one more than the number of its bend points). Throughout the paper we assume that a polysegment is not self-crossing.

\noindents{-2}{\bf Minors and distance minors}.
Given two graphs $H$ and $G$, we write $H\preccurlyeq G$ and call $H$ a {\em minor} of $G$, if $H$ can be obtained from a subgraph of $G$ by edge contractions
(the {\em contraction} of an edge $e=\{x,y\}$  in a graph $G$
 is the operation of replacing $x$ and $y$ by a new vertex $x_{e}$ that is made adjacent
with all the neighbors of $x$ and $y$ in $G$ that are different from $x$ and $y$).
Moreover, we say that $H$ is a {\em contraction} of $G$,
if $H$ can be obtained from $G$ by contracting edges.

Let $G$ be a simple graph. We denote as $G^{\rm \ell}$ the graph obtained from $G$ by adding a loop on each of its vertices.
 We also say that a subset $F$ of $E(G^{\rm \ell})$ is {\em solid},
 if for every $v_{1},v_{2}\in\bigcup_{e\in F}e$ there is a walk in $G^\ell$ from $v_{1}$ to $v_{2}$
 consisting of edges in $F$ and where each second edge is a loop.
We define the relation $\preccurlyeq_\phi$ between two graphs as follows.

 Let $H$ and $G$ be simple graphs.
Then we write $H\preccurlyeq_\phi G$, if there is a function  $\phi: E(G^{\rm \ell })\rightarrow V(H)\cup E(H)\cup\{\star\}$, such that
\begin{itemize}
\item[1.] for every vertex $v\in V(H)$, $\phi^{-1}(v)$ is a non-empty solid set,
\item[2.]
for every two distinct vertices $v_{1},v_{2}\in V(H)$, an edge in $\phi^{-1}(v_1)$ does not share a common endpoint with an edge in $\phi^{-1}(v_{2})$.
\item[3.]
for every edge  $e=\{v_1,v_2\}\in E(H)$ and every edge $e'$ in $\phi^{-1}(e)$, $e'$ is not a loop and  shares its one endpoint with an edge in $\phi^{-1}(v_1)$ and the other with an edge in $\phi^{-1}(v_2)$.

\item[4.] for every $e\in E(H)$, $|\phi^{-1}(e)|=1$.
\end{itemize}

The following lemma reveals the equivalence between the relation defined previously and the minor relation (for the proofs see Appendix). 

\begin{lemma}
\label{tldkgn}
If $G$  and  $H$ are graphs, then $H\preccurlyeq_\phi G$ if and only if $H$ is a minor of $G$.
\end{lemma}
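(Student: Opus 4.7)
The plan is to prove the biconditional by unpacking $\preccurlyeq_\phi$ into the standard ``model'' formulation of the minor relation and checking the two directions.

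First, I would interpret the definition. A solid subset $F \subseteq E(G^\ell)$ is precisely one where (i) every vertex in $\bigcup_{e \in F} e$ has its loop in $F$, and (ii) the non-loop edges of $F$ induce a connected subgraph on that vertex set; this follows from the alternation ``each second edge is a loop'' in the defining walk. Thus for each $v \in V(H)$, the set $V_v := \bigcup_{e \in \phi^{-1}(v)} e$ is a non-empty, connected subset of $V(G)$, and condition~2 guarantees that the sets $\{V_v : v \in V(H)\}$ are pairwise disjoint. Conditions~3 and~4 then assign to each edge $\{v_1,v_2\} \in E(H)$ exactly one non-loop edge of $G$ with one endpoint in $V_{v_1}$ and the other in $V_{v_2}$.

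For the forward direction $(\Rightarrow)$, I would use the sets $V_v$ above as the branch sets of a minor model of $H$ in $G$: contracting each $V_v$ to a single vertex in the subgraph of $G$ whose edge set consists of the non-loop edges in $\bigcup_{v} \phi^{-1}(v) \cup \bigcup_{e \in E(H)} \phi^{-1}(e)$ produces exactly $H$, so $H \preccurlyeq G$. For the backward direction $(\Leftarrow)$, given a minor model $\{V_v\}_{v \in V(H)}$ of $H$ in $G$, together with a chosen representative edge $e_{\{u,v\}}$ of $G$ between $V_u$ and $V_v$ for each edge $\{u,v\} \in E(H)$, I would fix a spanning tree $T_v$ of $G[V_v]$ for every $v$ and define $\phi$ by sending the edges of $T_v$ and the loops of $G^\ell$ at vertices of $V_v$ to $v$, sending $e_{\{u,v\}}$ to the edge $\{u,v\}$ of $H$, and sending every remaining edge of $E(G^\ell)$ to $\star$. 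Verifying conditions~1--4 is then a direct check: $\phi^{-1}(v)$ is non-empty (because $V_v \neq \emptyset$ forces at least one loop into it), solid (because $T_v$ is connected and we added all loops on $V_v$), the disjointness of the $V_v$ gives condition~2, and conditions~3 and~4 hold by construction of the $e_{\{u,v\}}$.

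The main obstacle is handling the solidity condition correctly, especially the degenerate case where $V_v$ consists of a single vertex: there $T_v$ is edgeless, so the only edge placed in $\phi^{-1}(v)$ is the loop at that vertex, and one must confirm that a trivial one-edge walk consisting of this loop satisfies the ``every second edge is a loop'' clause. Beyond this, the proof is bookkeeping: one needs to be careful that in the $(\Rightarrow)$ direction the non-loop edges inside a branch set really do connect all of $V_v$ (which follows because any two vertices of $V_v$ are joined by a solid walk whose loops can be deleted to yield a walk in $G$) and that in the $(\Leftarrow)$ direction no edge is assigned to two different images, which is why the spanning trees $T_v$ are used instead of all internal edges.
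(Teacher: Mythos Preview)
Your proposal is correct and follows essentially the same route as the paper: in both directions you pass between the data of $\phi$ and the standard branch-set model of a minor, using spanning trees $T_v$ together with the loops on $V_v$ for the backward construction, exactly as the paper does. One small caveat: your claim (i) that solidity forces \emph{every} loop on $\bigcup_{e\in F}e$ to lie in $F$ does not obviously follow from the paper's definition (a length-$0$ walk, or a walk of the form edge--loop--edge, can witness solidity without using the loop at an endpoint), but this does not affect your argument, since only the connectivity consequence (ii) is used in the forward direction, and in the backward direction you include all the relevant loops anyway.
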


\begin{proof}
Let $H\preccurlyeq_\phi G$. First notice, that by the definition of the function $\phi$ any loop of $G^\ell$ will be either discarded or mapped to a vertex of $H$. Conditions (1) and (2) guarantee that any two vertices $x,y$ of $H$ are corresponding to vertex-disjoint connected subgraphs $G_x,G_y$ of $G$. Moreover by (3), if $xy \in E(H)$ there is an edge in $G$ joining $G_x$ and $G_y$. Hence, $H$ can be obtained from a subgraph of $G$ by contracting the edges of the subgraphs $G_z$ ($z\in V(H)$) and thus is a minor of $G$.

Let now $H$ be a minor of $G$. This means that there exists a subgraph of $G$ consisting of disjoint trees $\{T_v\mid v\in V(H)\}$ plus a set of edges $E=E(H)$, such that contracting all the edges of the trees yields the graph $H$. Then, we choose $\phi$ as a function that maps the edges of a tree $T_x$ as well as the loops on the vertices of $T_x$ to $x$, the edges of $E$ to $E(H)$ and all other edges of $G^\ell$ to $\star$.
\end{proof}
%


Given the existence of a function $\phi$ as in the definition above, we say {\em $H$ is a $\phi$-generated minor of $G$}. Moreover, {\em $H$ is a distance minor of $G$} if $H$ is a $\phi$-generated minor of $G$ and the following additional condition holds:
%

\begin{itemize}
\item[5.] for every $e_{1},e_{2}\in E(G)\setminus \phi^{-1}(\star)$, ${\bf dist}_{H}(\phi(e_{1}),\phi(e_{2}))\leq {\bf dist}_{G}(e_{1},e_{2})$.
\end{itemize}

\noindents{-2}{\bf Contractions and $c$-contractions}. If the definition of the relation $\preccurlyeq_\phi$ is modified by omitting condition (4) and demanding that $\phi^{-1}(\star)=\emptyset$, then we deal with the contraction relation and we say that $H$ {\em is a $\phi$-generated contraction of $G$}. (Note, that condition (4) is not a requirement of the equivalence to the minor relation -- see also the proof of Lemma~\ref{tldkgn}.)
Let $c$ be a non negative integer. We say that $H$ is a {\em $c$-contraction of $G$} if $H$ is a $\phi$-generated contraction of $G$ and for all $v\in V(H)$, $G[\phi^{-1}(v)]$ is a graph of at most $c$ edges.

In this paper we we use the alternative, more complicated, definitions of minors and contractions as they are necessary for the proofs or our results.

\noindents{-2}{\bf Tree-decompositions and treewidth}. A {\em tree-decomposition} of a graph $G$, is a pair $(T,{\cal X})$, where $T$ is a tree and ${\cal X}=\{X_t:\ t\in V(T)\}$ is a family of subsets of $V(G)$, called {\em bags}, such that the following three properties are satisfied:

\begin{enumerate}
\setlength{\itemsep}{-.0pt}
\item[(1)] $\bigcup_{t\in V(T)}X_t=V(G)$,
\item[(2)] for every edge $e\in E(G)$ there exists $t\in V(T)$ such that $X_t$ contains both ends of $e$, and
\item[(3)] $\forall v\in V(G)$, the set $T_{v}=\{t\in V(T)\mid v\in X_{t}\}$ induces a tree in $T$.
\end{enumerate}

The {\em width} of a tree-decomposition is the cardinality of the maximum size bag minus 1 and the {\em treewidth} of a graph $G$ is the minimum width of a tree-decomposition of $G$. We denote the treewidth of $G$ by $\tw(G)$.  We say that a graph $H$ is a {\em partial triangulation} of a plane graph $G$
if $G$ is a spanning subgraph of $H$ and $H$ is plane.
The following result follows from~\cite{GuT12}.

\begin{proposition}
\label{propo}
Let $r$ be an integer. Then, any planar graph with treewidth at least $4.5\cdot r$ contains a partial triangulation of the $(r\times r)$-grid as a contraction.
\end{proposition}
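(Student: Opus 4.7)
The plan is to combine a sharp planar grid-minor theorem with a branch-set expansion argument, turning an abstract minor into a contraction whose quotient is a plane supergraph of the grid. I would first invoke the main combinatorial result of Gu and Tamaki~\cite{GuT12}: for planar graphs, the hypothesis $\tw(G)\ge 4.5\, r$ produces the $(r\times r)$-grid $L_{r}$ as a minor of $G$. Their argument is naturally stated in terms of branchwidth, and the constant $4.5$ arises from the Robertson--Seymour inequality $\tw(G)+1\le \tfrac{3}{2}\bw(G)$ combined with a branchwidth-to-grid bound of the form ``$\bw(G)\ge 3r$ implies $L_{r}\preccurlyeq G$''. Without loss of generality I may take $G$ to be connected, by restricting to the connected component of $G$ attaining the maximum treewidth.

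Fix pairwise disjoint connected branch sets $\{V_{i,j}\subseteq V(G):1\le i,j\le r\}$, one for each vertex of $L_{r}$, together with the $2r(r-1)$ preselected edges of $G$ that realise the edges of $L_{r}$. My next step is to enlarge these branch sets to a partition of $V(G)$ whose parts remain connected. This is done greedily: while some vertex $v\in V(G)$ lies outside every $V_{i,j}$, choose such a $v$ with a neighbour $u\in V_{i,j}$ and absorb $v$ into $V_{i,j}$. Connectivity of $G$ guarantees that this process terminates only when every vertex has been absorbed, and absorbing a neighbour of a connected set preserves connectivity of that set.

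Contracting each extended branch set, and identifying any parallel edges that the contraction produces, yields a graph $H$ that is, by construction, a contraction of $G$. Since contracting an edge of a plane graph can be carried out in the plane without introducing crossings, $H$ inherits a plane embedding from $G$; moreover the $2r(r-1)$ preselected edges survive the contraction and realise, inside $H$, the full edge set of $L_{r}$, so $L_{r}$ is a spanning subgraph of $H$. By the definition adopted in the excerpt, $H$ is thus a partial triangulation of $L_{r}$, completing the proof. I expect the only substantive obstacle to be Step~1, the sharp planar grid-minor bound itself; the postprocessing that converts the minor into the desired partial-triangulation contraction is routine bookkeeping, relying only on planarity and on connectivity of $G$.
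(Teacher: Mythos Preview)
The paper does not actually prove Proposition~\ref{propo}; it simply records that ``the following result follows from~\cite{GuT12}'' and moves on. Your proposal therefore supplies strictly more than the paper does.

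Your argument is correct and is the standard way to extract the proposition from the Gu--Tamaki grid-minor bound. The two steps---(i) invoke~\cite{GuT12} to obtain $L_{r}$ as a minor of the (w.l.o.g.\ connected) planar graph $G$, and (ii) greedily absorb the remaining vertices into the branch sets and contract---are exactly what is needed: the resulting quotient $H$ is a contraction of $G$, hence planar, has vertex set in bijection with $V(L_{r})$, and retains all $2r(r-1)$ grid edges, so $L_{r}$ is a spanning subgraph of the plane graph $H$, which is precisely the paper's definition of a partial triangulation. One cosmetic remark: your derivation of the constant $4.5$ via $\tw(G)+1\le\tfrac{3}{2}\bw(G)$ and ``$\bw(G)\ge 3r\Rightarrow L_{r}\preccurlyeq G$'' is the right outline, though the precise inequalities in~\cite{GuT12} may differ by additive constants; since the paper itself treats the proposition as a black box from~\cite{GuT12}, this level of detail is already beyond what is required.
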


\begin{lemma}
\label{hh4rtq}
Let $G$ be a planar graph and $k$ an integer. If $\tw(G)\geq18\cdot k$ then
$G$ contains a $(k\times k)$-grid as a distance minor.
\end{lemma}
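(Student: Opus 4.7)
The plan is to apply Proposition~\ref{propo} with parameter $r=4k$ and then extract $L_k$ as a distance minor from the resulting partial triangulation. Since $\tw(G)\ge 18k=4.5\cdot(4k)$, Proposition~\ref{propo} yields a partial triangulation $T$ of the $(4k\times 4k)$-grid as a $\phi_1$-generated contraction of $G$, so $\phi_1^{-1}(\star)=\emptyset$. Any path in $G$ projects, via $\phi_1$, to a walk in $T$ of no greater length (edges contracted inside a bag contribute $0$, others contribute $1$), so condition~(5) holds automatically for $\phi_1$ and ${\bf dist}_T(\phi_1(e_1),\phi_1(e_2))\le{\bf dist}_G(e_1,e_2)$ for every $e_1,e_2\in E(G)$. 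It therefore suffices to exhibit $L_k$ as a distance minor of $T$ and then compose.

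To this end, I would partition the $4k\times 4k$ vertex set of $T$ into $k^2$ blocks $B_{I,J}$ ($1\le I,J\le k$), each a $4\times 4$ sub-grid, and let $v_{I,J}\in V(L_k)$ be the target vertex associated to $B_{I,J}$. Define $\phi_2:E(T^\ell)\to V(L_k)\cup E(L_k)\cup\{\star\}$ by sending every edge and loop with both endpoints in the same block $B_{I,J}$ to $v_{I,J}$; for each pair of grid-adjacent blocks, picking one grid edge of $T$ between them and mapping it to the corresponding edge of $L_k$; and sending every remaining edge (extra grid edges between adjacent blocks, and all diagonal edges of the partial triangulation) to $\star$. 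Conditions~(1)--(4) of the minor definition follow by direct inspection.

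The heart of the argument is verifying the distance condition~(5) for $\phi_2$. Since each edge of $T$ (grid or diagonal) changes each Cartesian coordinate by at most one, one has ${\bf dist}_T((i,j),(i',j'))\ge\max(|i-i'|,|j-j'|)$ for any two vertices of $T$. Hence for $u\in B_{I,J}$ and $w\in B_{I',J'}$ with $(I,J)\ne(I',J')$,
$${\bf dist}_T(u,w)\ \ge\ 4\max(|I-I'|,|J-J'|)-3,$$
whereas ${\bf dist}_{L_k}(v_{I,J},v_{I',J'})=|I-I'|+|J-J'|\le 2\max(|I-I'|,|J-J'|)$. A short case analysis on whether each $\phi_2(e_i)$ is a vertex or an edge of $L_k$, adding the one or two unit steps contributed by $e_1,e_2$ themselves, then yields~(5); the factor $4$ in the block size supplies the necessary slack, particularly around ``four-block'' corners where a single diagonal of $T$ may directly connect diagonally adjacent blocks (this is the main obstacle of the proof). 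Composing $\phi_1$ with $\phi_2$ finally produces a map $\phi:E(G^\ell)\to V(L_k)\cup E(L_k)\cup\{\star\}$ realizing $L_k\preccurlyeq_\phi G$ as a distance minor, via the chained inequality ${\bf dist}_{L_k}(\phi(e_1),\phi(e_2))\le{\bf dist}_T(\phi_1(e_1),\phi_1(e_2))\le{\bf dist}_G(e_1,e_2)$.
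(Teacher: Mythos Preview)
Your proposal is correct and follows essentially the same route as the paper: apply Proposition~\ref{propo} with $r=4k$ to obtain a partial triangulation $T$ of the $(4k\times 4k)$-grid as a contraction of $G$, and then exhibit $L_k$ as a distance minor of $T$ using the $\ell_\infty$-distance lower bound $\mathbf{dist}_T((i,j),(i',j'))\ge\max(|i-i'|,|j-j'|)$ valid in any partial triangulation.

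The only substantive difference is the choice of minor model. The paper embeds each vertex of $L_k$ as a three-vertex ``L''-shape at spacing~$2$ inside the central $2k\times 2k$ region, leaving a width-$k$ collar; you instead partition the whole $4k\times 4k$ grid into $4\times 4$ blocks. Both constructions succeed for the same arithmetic reason (the factor~$4$ absorbs the loss from diagonals and from the $\pm 3$ slack within a block), and neither is simpler or more general than the other. One point in your favour is that you make the composition step with the contraction $G\to T$ explicit; the paper states $L_k\preccurlyeq_\phi P$ and silently uses (here and again in Lemma~\ref{pecm25r}) that contractions preserve the distance-minor relation. Your justification of this step via ``paths project to walks of no greater length'' is correct, though one should note that extracting a path from the projected walk while retaining both images $\phi_1(e_1),\phi_1(e_2)$ requires a small argument (take a minimal subwalk between an occurrence of each and shortcut its interior).
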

\begin{proof}
By Proposition~\ref{propo}, the graph $G$ contains a partial triangulation $P$ of a $(4 k\times4 k)$ grid as a contraction. We claim that the $(k\times k)$-grid $L_k$ is a $\phi$-generated distance minor of $P$, where the function $\phi: E(P^{\rm \ell })\rightarrow V(L_k)\cup E(L_k)\cup\{\star\}$ is defined as follows.
For any vertex $v_{i,j}$ of $L_k$, with $i,j\in \{1,\dots,k\}$, the set $\phi^{-1}(v)$ contains the three loops on the vertices of $P$ with coordinates $(k+2i,k+2j)$, $(k+2i-1,k+2j)$, $(k+2i,k+2j-1)$ and the two edges joining these three vertices. The set $\phi^{-1}(E(L_k))$ contains all possible candidates from the edges of the underlying grid of $P$, while anything else is mapped to $\star$.

One can easily verify that $L_k$ is indeed the $(k\times k)$-grid, and that the function $\phi$ satisfies the first four conditions. It remains to show that condition (5) also holds. It suffices to show it for the case of the distance of vertices and the rest follows. Consider, for this purpose, any two vertices $v_1,v_2$ in $L_k$ and let their distance be  ${\bf dist}_{L_k}(v_{1},v_{2})=\rho\leq 2k-2$. Since any vertex in $P$ that corresponds to a vertex of $L_k$ has a distance of at least $k$ from a boundary vertex, the shortest path in $P$ containing $\phi^{-1}(v_{1})$ and $\phi^{-1}(v_{2})$ has length at least the half of the length of the shortest path in the underlying grid, i.e. $ \frac1 2\cdot2\rho\geq\rho$.
\end{proof}

%
%

\section{\bf Bidimensionality of  line intersection graphs}

\label{ldfmnfd}

Let $\bb=\{B_{1},\ldots,B_{k}\}$ be a collection of lines in the plane.
The {\em intersection graph $G_{\cal B}$} of ${\cal B}$, is a graph whose vertex set is ${\cal B}$,
and that has an edge $\{B_{i},B_{j}\}$ (for $ i\ne j$) if and only if $B_{i}$ and $B_{j}$ {\em touch},
namely $B_{i}\cap B_{j}\ne\emptyset$.\medskip

The following theorem states our main technical result.
\begin{theorem}\label{thm:bidi-polyline-intersection}
Let ${\cal B}$ be a set of lines in the plane such that for each $C_{1},C_{2}\in {\cal B}$ with $C_{1}\neq C_{2}$, the set $C_{1}\cap C_{2}$ is a finite set of points and at most two lines intersect in the same point. Let also $G_{\cal B}$ be the intersection graph of ${\cal B}$ and let $\xi=\max_{C\in {\cal B}}|C\cap \bigcup_{C'\in {\cal B}\setminus C}C'|$. Then $\tw(G_{\cal B})=O(\xi\cdot {\bf bg}(G_{\cal B}))$.
\end{theorem}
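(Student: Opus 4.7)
The plan is to pass through the planarization of the arrangement. Let $P$ be the plane graph whose vertices are the pairwise intersection points of the lines in ${\cal B}$ and whose edges are the portions of lines between two consecutive intersection points. Since at most two lines cross at any intersection point, $P$ is a plane graph of maximum degree at most $4$, and the edges of $P$ that lie on a single line $C \in {\cal B}$ form a simple subpath of length at most $\xi - 1$.

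First, I would prove the tree-decomposition inequality $\tw(G_{{\cal B}}) \le 2\,\tw(P) + 1$. Starting from a tree decomposition $(T,\{X_t\})$ of $P$ of width $w$, one builds a tree decomposition of $G_{{\cal B}}$ on the same tree $T$ by replacing each vertex $v \in X_t$, which is the crossing of two lines $C,C' \in {\cal B}$, by the pair $\{C,C'\}$. The new bags have size at most $2(w+1)$; every edge of $G_{{\cal B}}$ corresponds to some intersection point, so it is covered; and the subtree property for a fixed line $C$ follows because any two consecutive crossings on $C$ are joined by a segment-edge of $P$ and therefore share some bag $X_t$, so the subtrees of $T$ indexed by consecutive crossings on $C$ overlap and their union is connected.

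Combining this inequality with Lemma~\ref{hh4rtq} gives that if $\tw(G_{{\cal B}}) \ge 36\,\xi\,k$, then $\tw(P) \ge 18\,\xi\,k$, and so $P$ contains the $(\xi k \times \xi k)$-grid as a distance minor via some map $\phi$. The central geometric step is then to convert this distance minor into an ordinary $(k\times k)$-grid minor of $G_{{\cal B}}$, losing a factor $\Theta(\xi)$. The key property is that the $\phi$-image of the edges of a single line $C$ has diameter $O(\xi)$ in the grid: any two such edges are at distance at most $\xi$ in $P$ (they lie on a common subpath of length at most $\xi - 1$), so by condition~(5) of the definition of a distance minor their images lie within grid-distance $\xi$. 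Tiling the $(\xi k \times \xi k)$-grid into $k \times k$ square blocks of side $\Theta(\xi)$ separated by buffer strips of width $\Theta(\xi)$, I would assign to each block $(i,j)$ the set $S_{i,j}$ of lines whose $\phi$-footprint meets the core of that block. The diameter bound forces the $S_{i,j}$ to be pairwise disjoint across blocks; each $S_{i,j}$ is connected in $G_{{\cal B}}$ because the $\phi$-preimage of a connected tile is a connected subgraph of $P$ and any two lines meeting in $P$ are adjacent in $G_{{\cal B}}$; and adjacency between neighbouring $S_{i,j}$ is witnessed by bridging lines whose footprints straddle the boundary between the two cores. After routing each bridging line to exactly one of the neighbouring sets, the collection $\{S_{i,j}\}$ realises a $(k\times k)$-grid as a minor of $G_{{\cal B}}$, yielding the desired bound $\tw(G_{{\cal B}}) = O(\xi \cdot {\bf bg}(G_{{\cal B}}))$.

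The main difficulty is this last step: the requirements of disjointness of the $S_{i,j}$ (which prefers large buffers) and of adjacency between neighbouring $S_{i,j}$ (which prefers short bridges) pull in opposite directions. Choosing both the block widths and the buffer widths to be suitable constant multiples of $\xi$, and then verifying, using the planarity of $P$ together with the footprint-diameter bound, that bridging lines indeed exist for every neighbouring pair and can be distributed to adjacent sets without destroying the internal connectivity of any $S_{i,j}$, is the technical heart of the proof.
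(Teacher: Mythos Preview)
Your plan is sound and tracks the paper's argument closely, with one packaging difference. The paper does not work directly with the arrangement graph $P$. Instead it introduces an auxiliary graph $G$ obtained by splitting each degree-$4$ crossing vertex of $P$ into two adjacent vertices, one on each of the two crossing lines; then $P$ becomes a $1$-contraction of $G$, and $G_{\cal B}$ becomes a $(\xi+1)$-contraction of the very same $G$ (contract all edges that lie on a given line). The whole theorem is then a one-line application of the general Lemma~\ref{pecm25r}: if a planar $H_1$ and an arbitrary $H_2$ are respectively a $c_1$- and a $c_2$-contraction of a common graph, then $\tw(H_2)=O(c_1c_2\cdot{\bf bg}(H_2))$. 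The block-and-buffer grid-transfer that you outline in Step~4 is exactly the content of the paper's Lemma~\ref{biddslss}, but stated once for $c$-contractions rather than reproved for the specific $P$-versus-$G_{\cal B}$ relationship. Your direct lifting in Step~2 is a clean substitute for the paper's Lemma~\ref{gkkkk5g}.

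One point in your Step~4 is garbled as written: if the buffers are wide enough to make the $S_{i,j}$ pairwise disjoint, then by the very diameter bound you invoke no single ``bridging line'' can have a footprint straddling two cores --- that is precisely what disjointness rules out. What actually witnesses adjacency of $S_{i,j}$ and $S_{i+1,j}$ is a \emph{chain} of lines running through the corridor between the two cores; one extends $S_{i,j}$ and $S_{i+1,j}$ into that corridor and splits the chain in the middle. Guaranteeing that these corridor extensions never collide with extensions from other corridors (and never reach a third core) is the real bookkeeping, and it is exactly what the paper's Lemma~\ref{biddslss} carries out via the sets $U^{\rm hor}_{i,j},U^{\rm ver}_{i,j}$ and the claim~$(*)$. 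Because $G_{\cal B}$ is \emph{not} a contraction of $P$ (each vertex of $P$ lies on two lines), you cannot simply quote Lemma~\ref{biddslss} and must redo its proof in your setting; the paper's auxiliary-graph trick is precisely what lets it invoke the clean contraction lemma instead.
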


To prove Theorem~\ref{thm:bidi-polyline-intersection} we will need  a series of lemmata.
%

\begin{lemma}
\label{gkkkk5g}
Let $G$ be a graph and let $H$  be a  $c$-contraction of $G$.
 Then $\tw(G)\leq (c+1)\cdot (\tw(H)+1)-1$.
\end{lemma}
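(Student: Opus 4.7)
\medskip
\noindent\textbf{Plan of proof.} The plan is to lift a tree-decomposition of $H$ to one of $G$ by replacing each occurrence of a vertex $v\in V(H)$ in a bag with the set of vertices of $G$ that get identified to $v$ under the contraction map $\phi$. The key quantitative input is that, in a $c$-contraction, each such preimage set has at most $c+1$ vertices; then each bag grows by a factor of at most $c+1$.

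First, I would extract from $\phi$ a vertex partition of $V(G)$. For each $v\in V(H)$, set $V_{v}=\bigcup_{e\in\phi^{-1}(v)} (e\cap V(G))$. I would argue that the sets $\{V_{v}\}_{v\in V(H)}$ partition $V(G)$: every loop on a vertex $u$ of $G^{\rm\ell}$ is mapped by $\phi$ somewhere, it cannot be mapped to $\star$ because $\phi^{-1}(\star)=\emptyset$ in a contraction, and it cannot be mapped to an edge of $H$ by condition~(3); hence it is mapped to some $v\in V(H)$, placing $u$ in $V_{v}$. Disjointness comes directly from condition~(2). Moreover, the solidness of $\phi^{-1}(v)$ (more precisely, the fact that walks witnessing solidness use a non-loop edge at least every other step) yields that $G[V_{v}]$ is connected; combined with the $c$-contraction hypothesis that $G[V_{v}]$ has at most $c$ edges, this gives $|V_{v}|\le c+1$.

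Next, given a tree-decomposition $(T,\{X_{t}\}_{t\in V(T)})$ of $H$ of width $\tw(H)$, I would define the expanded family $X'_{t}=\bigcup_{v\in X_{t}}V_{v}$ for each $t\in V(T)$ and claim that $(T,\{X'_{t}\})$ is a tree-decomposition of $G$ of width at most $(c+1)(\tw(H)+1)-1$. The size bound is immediate from $|V_{v}|\le c+1$. The three tree-decomposition axioms are verified in turn: vertex coverage follows from the partition property; for edge coverage, given $e=\{u_{1},u_{2}\}\in E(G)$, inspecting $\phi(e)$ gives either $\phi(e)=v\in V(H)$, so $u_{1},u_{2}\in V_{v}$ and any bag of $T$ containing $v$ works, or $\phi(e)=\{v_{1},v_{2}\}\in E(H)$, so by condition~(3) we have $u_{i}\in V_{v_{i}}$, and any bag of $T$ containing both $v_{1}$ and $v_{2}$ (one exists since $\{v_{1},v_{2}\}\in E(H)$) works after expansion; the connectivity axiom holds because, for $u\in V_{v}$, we have $u\in X'_{t}\iff v\in X_{t}$, so the host subtree of $u$ equals the host subtree of $v$.

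The only place where one needs to be careful is step one, extracting $|V_{v}|\le c+1$ from the solidness condition — that is, pinning down exactly why $G[V_{v}]$ is connected in terms of the alternating walk condition on $\phi^{-1}(v)$. Once this is established the rest is a routine tree-decomposition bookkeeping argument.
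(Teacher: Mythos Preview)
Your proposal is correct and follows essentially the same approach as the paper: partition $V(G)$ into the connected preimage sets $V_v$, bound $|V_v|\le c+1$ via connectedness plus the $c$-edge bound, and then blow up each bag of an optimal tree-decomposition of $H$ by replacing every $v$ with $V_v$. The paper's own argument is terser (it simply asserts the partition into connected sets of size $\le c+1$ from the contraction definition and then does the same bag-replacement), whereas you unpack the partition from the conditions on $\phi$ more explicitly; but the underlying idea is identical.
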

\begin{proof}
By definition, since $H$ is a $c$-contraction of $G$, there is a mapping between each vertex of $H$ and a connected set of at most $c$ edges in $G$, so that by contracting these edgesets we obtain $H$ from $G$. The endpoints of these edges form disjoint connected sets in $G$, implying a partition of the vertices of $G$ into connected sets $\{V_x\mid x\in V(H)\}$, where $|V_x|\leq c+1$ for any vertex $x\in V(H)$.


Consider now a tree decomposition  $(T,{\cal W})$ of $H$. We claim that the pair $(T,{\cal W'})$, where $W'_t:=\bigcup_{x\in W_t} V_x$ for $t\in T$ is a tree decomposition of $G$. Clearly all vertices of $G$ are included in some bag, since all vertices of $H$ did. Every edge of $G$ with both endpoints in the same part of the partition is in a bag, as each of these vertex sets is placed as a whole in the same bag. If $e$ is an edge of $G$ with endpoints in different parts of the partition, say $V_x$ and $V_y$, then this implies that $xy\in E(H)$. Thus, there is a node $t$ of $T$ for which $x,y \in W_t$ and therefore $e\in W_t'$. Moreover, the continuity property remains unaffected, since for any vertex $x \in V(H)$ all vertices in $V_x$ induce the same subtree in $T$ that $x$ did. 
\end{proof}

\begin{lemma}\label{lattgd}
Let $G$ be a graph and let $V_{1},\ldots, V_{r}$ be a partition of the vertices of $G$, such that for each $i\in\{1,\ldots,r\}$, $G[V_{i}]$ is a connected graph, and for each $i\in\{1,\ldots,r-1\}$ there exist an edge of $G$ with one endpoint in $V_{i}$ and one endpoint in $V_{i+1}$. Let also $s\in V_1$ and $t\in V_r$.
Then $G$ has a path from $s$ to $t$ with a part $P$ of length at least $\beta-\alpha+2$, where $1\leq\alpha<\beta\leq r$, so that $P$ does not contain any edge in $G[V_{i}]$ for $i \in \{1,\dots,\alpha-1\}\cup\{\beta+1,\dots,r\}$.
\end{lemma}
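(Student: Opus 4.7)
The plan is to construct a canonical path from $s$ to $t$ that visits $V_1, V_2, \ldots, V_r$ in this order, and then to read off the required subpath $P$ from it.

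First, for each $i \in \{1, \ldots, r-1\}$ I would fix a crossing edge $e_i = \{a_i, b_i\}$ with $a_i \in V_i$ and $b_i \in V_{i+1}$ (these exist by hypothesis), and set $b_0 := s$ and $a_r := t$. Since $G[V_i]$ is connected, I can pick a shortest $G[V_i]$-path $R_i$ from $b_{i-1}$ to $a_i$. The concatenation $R_1\,e_1\,R_2\,e_2\,\cdots\,e_{r-1}\,R_r$ is a walk from $s$ to $t$; after suppressing any repeated vertices I obtain a genuine $s$-$t$ path $Q$. Writing $\ell_i$ for the number of edges of $Q$ that lie inside $V_i$, the total length of $Q$ equals $(r-1) + \sum_{i=1}^{r} \ell_i$.

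For any candidate $1 \le \alpha < \beta \le r$, the part of $Q$ starting with the first edge incident to $V_\alpha$ and ending with the last edge incident to $V_\beta$ uses precisely the crossings $e_\alpha, \ldots, e_{\beta-1}$ together with the internal edges supplied by $R_\alpha, \ldots, R_\beta$; its length is therefore $(\beta-\alpha) + \sum_{i=\alpha}^{\beta} \ell_i$, and by construction it avoids every internal edge of $G[V_j]$ for $j \notin [\alpha,\beta]$. Hence the required inequality $|E(P)| \ge \beta-\alpha+2$ reduces to producing $\alpha < \beta$ with $\sum_{i=\alpha}^{\beta} \ell_i \ge 2$. When $\sum_{i=1}^{r} \ell_i \ge 2$ this is immediate: take $\alpha$ to be the smallest and $\beta$ to be the largest index with $\ell_i > 0$, enlarging the interval if necessary to enforce $\alpha < \beta$.

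The main obstacle, and the place where I expect the real work to concentrate, is the degenerate situation in which $Q$ contains fewer than two internal edges overall, so that in most blocks $V_i$ the entry and exit vertices coincide. I would handle this case by exploiting the freedom in choosing the crossing edges $e_i$ together with the endpoints $a_i, b_{i-1}$: whenever a block $G[V_i]$ contains more than one vertex, an alternative choice of incident crossings forces $b_{i-1} \ne a_i$ and hence $\ell_i \ge 1$, and performing this re-routing in two blocks (or twice inside a single sufficiently large block) yields a cumulative excess of at least two internal edges, localised to a contiguous interval of indices. Once an appropriate pair $\alpha, \beta$ and a corresponding path $Q$ are secured, the part $P$ of length $\beta-\alpha+2$ is read off directly, completing the proof.
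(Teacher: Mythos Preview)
You have built exactly the same canonical walk as the paper: fix crossing edges $e_i=\{t_i,s_{i+1}\}$, connect $s_i$ to $t_i$ inside $G[V_i]$, and concatenate. The divergence is that you then try to \emph{find} indices $\alpha<\beta$ making the bound $\beta-\alpha+2$ hold. This is a misreading of the statement: $\alpha$ and $\beta$ are not yours to choose. The clause ``where $1\le\alpha<\beta\le r$'' is a universal quantifier, and indeed this is how the lemma is invoked in the proof of Lemma~\ref{biddslss}, where the caller fixes $\alpha,\beta$ to delimit a prescribed middle block $U_{i,j}$. The paper's proof therefore stops immediately after constructing the walk: for the given $\alpha,\beta$ it simply exhibits the part $e_\alpha P_{\alpha+1}\cdots e_{\beta-1}P_\beta e_\beta$, which contains the $\beta-\alpha+1$ crossing edges $e_\alpha,\ldots,e_\beta$ and uses internal edges only from $G[V_{\alpha+1}],\ldots,G[V_\beta]$. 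No search, no case analysis.

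Your ``degenerate case'' is an artefact of this misreading, and the proposed fix does not work. If every $V_i$ is a single vertex, then every $G[V_i]$ has no edges at all, so no choice of crossing edges can force $\ell_i\ge 1$ anywhere; under your existential reading you would then be unable to produce \emph{any} $\alpha<\beta$ with $\sum_{i=\alpha}^\beta\ell_i\ge 2$. Under the intended universal reading this situation causes no difficulty, because the required part is read off directly from the crossing edges regardless of whether the internal paths $P_i$ are trivial.
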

\begin{proof}
For each $i\in \{1,\dots,r-1\}$ let $e_i=t_i s_{i+1}$ be an edge of $G$ with one endpoint $t_i$ in $V_i$ and the other $s_{i+1}$ in $V_{i+1}$ and set $s_1=s$ and $t_r=t$.
For any $i\in \{1,\dots,r\}$, since $G[V_i]$ is a connected graph, there is a path $P_i$ from $s_i$ to $t_i$ that lies entirely in $G[V_i]$ (possibly the trivial path of no edges). Then the path $P_1e_1\dots P_{r-1}e_{r-1}P_r$ from $s$ to $t$, with its part $e_aP_{a+1}\dots e_{\beta-1}P_\beta e_\beta$ satisfies the requirements of the Lemma. \hfill \qed
\end{proof}

\begin{lemma}\label{trassnss}
Let $A$, $B$, and $C$ be graphs such that $B$ is a $\psi_{1}$-generated contraction of $A$ and $C$ is a $\psi_{2}$-generated minor of $A$ for some functions $\psi_{1}: E(A^{\rm \ell})\rightarrow V(B)\cup E(B)$ and $\psi_{2}: E(A^\ell)\rightarrow V(C)\cup E(C)\cup\{\star\}$. If 
\begin{eqnarray}
\forall_{e\in E(C)} & & |\psi_2^{-1}(e)\cap \psi_1^{-1}(E(B))|=1 \label{sdfggwr}
\end{eqnarray}
then $C$ is also a minor of $B$.
\end{lemma}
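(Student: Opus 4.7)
The plan is to build a witness $\phi : E(B^{\rm \ell}) \to V(C) \cup E(C) \cup \{\star\}$ for $C \preccurlyeq B$ by pushing the $\psi_{2}$-branch-set structure forward through the contraction $\psi_{1}$. Since $\psi_{1}^{-1}(\star) = \emptyset$ and loops cannot be mapped to edges by condition (3), every loop $\ell_{u}$ of $A^{\rm \ell}$ is sent by $\psi_{1}$ to a unique vertex $\sigma(u) \in V(B)$, giving a map $\sigma : V(A) \to V(B)$. For each $v \in V(C)$ with $\psi_{2}$-branch set $U_{v} \subseteq V(A)$, set $W_{v} := \sigma(U_{v}) \subseteq V(B)$, declare $\phi(\ell_{w}) = v$ for every $w \in W_{v}$, and for every non-loop $a \in \psi_{2}^{-1}(v)$ with $\psi_{1}(a) \in E(B)$, set $\phi(\psi_{1}(a)) = v$. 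For each $e \in E(C)$, the hypothesis \eqref{sdfggwr} together with condition (4) for $\psi_{2}$ singles out a unique $a_{e} \in \psi_{2}^{-1}(e)$ with $\psi_{1}(a_{e}) \in E(B)$, and set $\phi(\psi_{1}(a_{e})) = e$. All remaining edges of $B^{\rm \ell}$ are sent to $\star$.

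I would then verify the four conditions of the $\preccurlyeq_{\phi}$-relation. Condition (4) is immediate. Condition (3) transports directly from condition (3) for $\psi_{2}$: writing $a_{e} = \{u_{1}, u_{2}\}$ with $u_{i} \in U_{v_{i}}$, the hypothesis \eqref{sdfggwr} forces $\psi_{1}(a_{e})$ to be a non-loop edge of $B$ whose endpoints $\sigma(u_{1}) \in W_{v_{1}}$ and $\sigma(u_{2}) \in W_{v_{2}}$ carry loops assigned to $\phi^{-1}(v_{1})$ and $\phi^{-1}(v_{2})$. For condition (1), the solidity of $\phi^{-1}(v)$ is obtained by projecting through $\sigma$ any walk in $A^{\rm \ell}$ witnessing the solidity of $\psi_{2}^{-1}(v)$: loops in $A^{\rm \ell}$ map to loops of $B^{\rm \ell}$ that are already in $\phi^{-1}(v)$, while non-loop edges of $A$ either survive as non-loops of $B$ in $\phi^{-1}(v)$ or collapse to a loop that is already present in $\phi^{-1}(v)$.

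The principal obstacle is condition (2), which reduces to showing that $W_{v_{1}} \cap W_{v_{2}} = \emptyset$ for distinct $v_{1}, v_{2} \in V(C)$. This is the step that genuinely requires the hypothesis \eqref{sdfggwr}, in concert with the solidity of the contraction preimages $\psi_{1}^{-1}(w)$. Assume, for contradiction, that $\sigma(u) = \sigma(u') = w$ for some $u \in U_{v_{1}}$ and $u' \in U_{v_{2}}$; then $\ell_{u}, \ell_{u'} \in \psi_{1}^{-1}(w)$, so the solidity of $\psi_{1}^{-1}(w)$ supplies a walk in $A^{\rm \ell}$ from $u$ to $u'$ all of whose edges are contracted by $\psi_{1}$. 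Examining the first edge $a^{*}$ along this walk that leaves $U_{v_{1}}$, one performs a case analysis on $\psi_{2}(a^{*})$: condition (2) for $\psi_{2}$ rules out $\psi_{2}(a^{*}) \in V(C)$; the hypothesis \eqref{sdfggwr} rules out $\psi_{2}(a^{*}) \in E(C)$ incident to $v_{1}$, since the unique representative of such an edge is preserved in $B$ and hence cannot lie in $\psi_{1}^{-1}(w)$; the remaining $\star$-case is dealt with by propagating the same exclusion along further edges of the walk until the first vertex in some $U_{v_{i}}$ with $i \neq 1$ is met and repeating the argument. This boundary-tracing argument, which converts the edge-level hypothesis \eqref{sdfggwr} into the vertex-level disjointness of the $W_{v}$'s in $B$, is the technical heart of the proof.
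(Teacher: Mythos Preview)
Your construction of $\phi$ coincides with the paper's: push the $\psi_{2}$-branch sets forward through the contraction map $\sigma$ and use hypothesis~\eqref{sdfggwr} to pin down the edge preimages. Conditions (1), (3), and (4) transfer as you describe, and you are right that condition~(2)---the disjointness of the $W_{v}$'s in $B$---is where the real work lies; the paper, by contrast, simply asserts that $B[N_{B}]$ is obtained from $A[N_{A}]$ by contracting edges and concludes disconnectedness in one line.

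Your argument for (2), however, breaks down in the $\star$-case. You trace the $\psi_{1}^{-1}(w)$-walk from $u\in U_{v_{1}}$ to $u'\in U_{v_{2}}$ and case-split on $\psi_{2}(a^{*})$ for the first edge $a^{*}$ leaving $U_{v_{1}}$; when $\psi_{2}(a^{*})=\star$ you propose to continue along the walk until the first vertex in some $U_{v_{i}}$ with $i\neq 1$ is reached and then ``repeat the argument''. But the edge entering that vertex may again be a $\star$-edge, and then none of your cases produces a contradiction. This gap is not an artefact of your presentation: the lemma as stated is actually false. Take $A$ to be the path $u_{1}u_{2}\,x\,u_{3}u_{4}$, let $B$ be the path $u_{1}\,w\,u_{4}$ obtained by contracting $u_{2}x$ and $xu_{3}$, and let $C$ be a two-edge matching, realized by sending $u_{1}u_{2}$ and $u_{3}u_{4}$ to the two edges of $C$, the loops at $u_{1},u_{2},u_{3},u_{4}$ to the four vertices of $C$, and $u_{2}x,\ell_{x},xu_{3}$ to $\star$. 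Then~\eqref{sdfggwr} holds (both of $u_{1}u_{2},u_{3}u_{4}$ lie in $\psi_{1}^{-1}(E(B))$), yet $|V(C)|=4>3=|V(B)|$, so $C\not\preccurlyeq B$. The paper's own verification of condition~(2) carries the same defect: the assertion that $B[N_{B}]$ is isomorphic to $A[N_{A}]$ after contracting the $\psi_{1}^{-1}(V(B))$-edges \emph{inside} $N_{A}$ ignores identifications of $N_{A}$-vertices caused by contracted paths that run through $\psi_{2}^{-1}(\star)$, exactly the situation in the example above.
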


\begin{proof}
We will define a function $\phi: E(B^{\rm \ell})\rightarrow V(C)\cup E(C)\cup\{\star\}$, which in turn guaranties that $C$ is a minor of $B$. 
First of all notice that by~\eqref{sdfggwr} there is a subset $F_E\subseteq E(B)$ and a bijection $\eta$ between $F_E$ and $E(C)$. In fact, for any edge $e$ of $C$ it holds that $\eta^{-1}(e)=\psi_1(\psi_2^{-1}(e))$. We set $\phi|_{F_E}=\eta$ and we observe that $\phi$ is obliged to map any edge in $F'=E(B^\ell)\setminus F_E$ either to a vertex of $C$ or to $\star$.

Let now $v$ be a vertex of $V(C)$. Recall that $\psi_{2}^{-1}(v)$ is a non-empty solid set of edges in $A^\ell$, and thus it induces a connected subgraph, say $A_v$, in $A$. Notice also that each edge $e_v$ of $C$ incident to $v$ has a unique pre-image in $A$, which has exactly one endpoint in $A_v$. Furthermore, the graph $B_v=B[\psi_1(\psi_{2}^{-1}(v))]$ is isomorphic to the graph taken if we contract in $A_v$ all its edges that belong in $\psi^{-1}_{1}(V(B))$. As the contraction of edges does not harm the connectivity of an edgeset, it follows that $\psi_1(\psi_{2}^{-1}(v))$ is a connected set of edges in $B$ and that again $B_v$ is connected and each edge $\psi_1(\psi_{2}^{-1}(e_v))$ has an endpoint in $B_v$. We set $\phi(f)=v$, if $f\in E(B^\ell)$ is an edge of $B_v$ or a loop on a vertex of $B_v$. Let $F_V\subseteq E(B^\ell)$ be the union of all edgesets  $\phi^{-1}(v)$, for each $v \in V(C)$ and observe that $F_E\cap F_V=\emptyset$. Finally, we set $\phi(E(B^\ell)\setminus (F_E\cup F_V))=\star$. It remains to prove that the four conditions of the minor definition are satisfied.

The first and fourth conditions follow straightforwardly from the definition of $\phi$.
For the second, assume that  $v_{1},v_{2}$ are two distinct vertices of $V(C)$.
We set $N_{A}=\psi_{2}^{-1}(v_1)\cup \psi_{2}^{-1}(v_{2})$
and $N_{B}=\psi_{1}(\psi_{2}^{-1}(v_1))\cup \psi_{1}(\psi_{2}^{-1}(v_{2}))$.
Notice that $B[N_{B}]$ is isomorphic to the graph taken from $A[N_{A}]$
after we contract all  its edges that belong in $\psi^{-1}_{1}(V(B))$.
As $A[N_{A}]$ is a disconnected graph, the same holds for $B[N_{B}]$.
Therefore $N_{B}$ is disconnected.

For the third, let $e=(v_{1},v_{2})\in E(C)$. Clearly $\phi^{-1}(e)$ is not a loop since it belongs in $F_E$. Moreover, in $A$, one endpoint of $e'=\psi_{2}^{-1}(e)$
is in $A[\psi^{-1}_{2}(v_{1})]$ and the other
is in $A[\psi^{-1}_{2}(v_{2})]$. Thus, $M_{A}=\psi^{-1}_{2}(v_{1})\cup\{e'\}\cup\psi^{-1}_{2}(v_{2})$
is a connected set of $A$ while $M_{A}\setminus\{e'\}$ is not.
We set $M_{B}=\psi_{1}(\psi_{2}^{-1}(v_{1}))\cup\{e''\}\cup \psi_{1}(\psi_{2}^{-1}(v_{2}))$, where $e''=\phi^{-1}(e)$ and we observe that $B[M_{B}]$ is isomorphic to the graph taken from $A[M_{A}]$ after we contract all its edges that belong in $\psi^{-1}_{1}(V(B))$ in a way that, in this isomorphism, $e'$ corresponds to the edge $e''$.
This means that $M_{B}$ is a connected set of $B$ while $M_{B}\setminus\{e''\}$ is not. We conclude that the one endpoint of  $e''$ belongs in $\phi^{-1}(v_1)$ and the other belongs in $\phi^{-1}(v_2)$, as required.
\end{proof}

\begin{lemma}\label{biddslss}
Let $G$ be a connected  graph and  let $H$ be a $c$-contraction of $G$. If $G$ contains a $(k\times k)$-grid as a distance minor, then $H$ contains a $(k',k')$-grid as a minor, where $k'=\lfloor\frac{k-1}{2(c+1)}\rfloor+1$.
\end{lemma}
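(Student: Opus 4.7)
The plan is to deduce the conclusion from Lemma~\ref{trassnss} applied to $A = G$, $B = H$, $C = L_{k'}$. What I will construct is a minor function $\tilde\psi \colon E(G^\ell) \to V(L_{k'}) \cup E(L_{k'}) \cup \{\star\}$ realising $L_{k'}$ as a minor of $G$ and satisfying the extra condition that, for every edge $e$ of $L_{k'}$, the unique edge in $\tilde\psi^{-1}(e)$ lies in $\psi_1^{-1}(E(H))$. Lemma~\ref{trassnss} will then give $L_{k'} \preccurlyeq H$ at once.

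First I realise $L_{k'}$ as the sub-grid of $L_k$ obtained by keeping one vertex out of every $d = 2(c+1)$ in each coordinate direction, so that $k' = \lfloor (k-1)/d \rfloor + 1$; every edge $e = (v, v')$ of $L_{k'}$ then corresponds to a canonical $L_k$-path of length exactly $d$ from $v^*$ to $v'^*$. Pulling back through $\psi_2$, this path becomes a \emph{corridor} in $G$: a chain of pairwise disjoint connected branch sets $V_0, V_1, \dots, V_d$, with $V_0 = V(\psi_2^{-1}(v^*))$ and $V_d = V(\psi_2^{-1}(v'^*))$, linked by $d$ chord edges. Applying Lemma~\ref{lattgd} to this corridor with the partition $V_0,\dots,V_d$ yields a path $P_e$ that traverses the cells in order from some $s \in V_0$ to some $t \in V_d$ and has length at least $d$.

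The crucial step is a pigeonhole argument: since $H$ is a $c$-contraction, every class $\psi_1^{-1}(h)$ spans at most $c$ edges of $G$, and any simple path in $G$ composed solely of contracted edges is therefore confined to a single class and has length at most $c$. Since $|P_e| \ge d = 2(c+1) > c$, some edge $g_e$ of $P_e$ must satisfy $\psi_1(g_e) \in E(H)$. I fix one such $g_e$ for every edge of $L_{k'}$ and split $P_e = P_e^{v}\, g_e\, P_e^{v'}$, where $P_e^v$ is the sub-path reaching the endpoint of $g_e$ lying on $V_0$'s side. The construction of $\tilde\psi$ is then: for each vertex $v$ of $L_{k'}$, let $\tilde\psi^{-1}(v)$ consist of the edges of $\psi_2^{-1}(v^*)$, the edges of $P_e^v$ for every $L_{k'}$-edge $e$ incident to $v$, and the loops of every vertex touched by any of these edges; set $\tilde\psi^{-1}(e) = \{g_e\}$ for each edge $e$ of $L_{k'}$; and map all remaining elements of $E(G^\ell)$ to $\star$. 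Each $\tilde\psi^{-1}(v)$ is solid because $\psi_2^{-1}(v^*)$ is connected and every $P_e^v$ is a path anchored in $V_0$; disjointness between distinct $\tilde\psi^{-1}(v)$ holds because non-adjacent corridors live in disjoint regions of $L_k$ and, for adjacent $v, v'$ sharing a corridor, $P_e^v$ and $P_e^{v'}$ sit on opposite sides of the separating edge $g_e$. Since $\tilde\psi^{-1}(e) \cap \psi_1^{-1}(E(H)) = \{g_e\}$, the hypothesis of Lemma~\ref{trassnss} is met, and $L_{k'} \preccurlyeq H$ follows.

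The subtle point worth anticipating is that one cannot, in general, locate a non-contracted \emph{chord} edge inside the corridor: a contraction may well absorb every chord $g_1, \dots, g_d$ while leaving only intra-cell edges uncontracted. The spine $P_e$ produced by Lemma~\ref{lattgd} is exactly what rescues the argument, since the non-contracted edge $g_e$ supplied by the pigeonhole count may legitimately be an internal edge of some intermediate cell $V_i$; splitting along $P_e$ simply cuts that cell's spine traversal between $\tilde\psi^{-1}(v)$ and $\tilde\psi^{-1}(v')$, preserving both disjointness and solidness of the branch sets throughout the rest of the construction.
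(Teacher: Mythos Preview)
Your plan matches the paper's, but the argument has a genuine gap: you never use the \emph{distance} part of the distance-minor hypothesis, and without it the passage from $G$ to $H$ fails. Your pigeonhole supplies a non-contracted edge $g_e$ on each spine, and the literal condition~\eqref{sdfggwr} of Lemma~\ref{trassnss} is indeed met; but that condition alone does not force the branch sets $\psi_1(\tilde\psi^{-1}(v))$ to be disjoint in $H$. A single $\psi_1$-class is a connected set of at most $c$ edges of $G$, and the ordinary minor relation says nothing about where such a class sits relative to your corridors---it may meet two of them, in which case the images in $H$ of $\tilde\psi^{-1}(v_1)$ and $\tilde\psi^{-1}(v_2)$ share a vertex and the minor model collapses. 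Equivalently, nothing in your write-up rules out $\psi_1(g_{e_1})=\psi_1(g_{e_2})$ for distinct $e_1,e_2\in E(L_{k'})$. The paper addresses exactly this point: it confines each selected edge to a designated middle zone $U^{\rm hor/ver}_{i,j}$ of its corridor, arranged so that any two zones lie at $L_k$-distance greater than $c$, and then claim~($*$)---the only place condition~(5) of the distance-minor definition is invoked---shows that two selected edges cannot be joined by a pair of disjoint paths of length at most $c$ in $G$, which is precisely what belonging to a common pair of $\psi_1$-classes would force.

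A smaller technical issue is a symptom of the same omission. Because you only bound the total length of $P_e$, the edge $g_e$ may sit inside the terminal cell $V_0$ (or $V_d$). Then $P_e^{v'}$ begins at a vertex of $V_0$, and since all of $V_0$ already lies in $\tilde\psi^{-1}(v)$ via the edges of $\psi_2^{-1}(v^*)$, disjointness fails in $G$ before one even looks at $H$. The paper avoids this by applying Lemma~\ref{lattgd} with the $\alpha,\beta$ parameters strictly inside the index range, obtaining a sub-path of length at least $c+1$ that lives entirely in the interior cells and selecting $g_e$ there.
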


\begin{proof}
We assume that $c$ is an odd number and equivalently prove the lemma for $k'=\lfloor\frac{k-1}{2c}\rfloor+1$.

Let $H$ be a $\sigma$-generated contraction of $G$ for some $\sigma: E(G^\ell)\rightarrow V(H)\cup E(H)$ such that $G[\sigma^{-1}(v)]$ is a graph of at most $c$ edges for all $v\in V(H)$. Suppose also that $G$ contains a $(k\times k)$-grid $L_{k}$ as a distance minor via a function
$\phi: E(G^\ell)\rightarrow V(L_{k})\cup E(L_{k})\cup\{\star\}$.

We assume that $V(L_{k})=\{1,\ldots,k\}^{2}$ where each $(i,j)$
corresponds to its grid  coordinates.
%
%
Our target is to prove that the $(k'\times k')$ grid $L_{k'}$ is a minor of $H$.
We define $\alpha: \{1,\ldots,k'\}\rightarrow \{1,\ldots,k\}$ such that $\alpha(i)=2(i-1)c+1$. Notice that this definition is possible as $2(k'-1)c+1\leq k$.
For each $(i,j)\in\{1,\ldots,k'\}^{2}$, we define a {\em horizontal} and a {\em vertical}  set of vertices in $V(L_k)$,

\begin{figure}[ht]
\begin{center}
\scalebox{.78}{\input{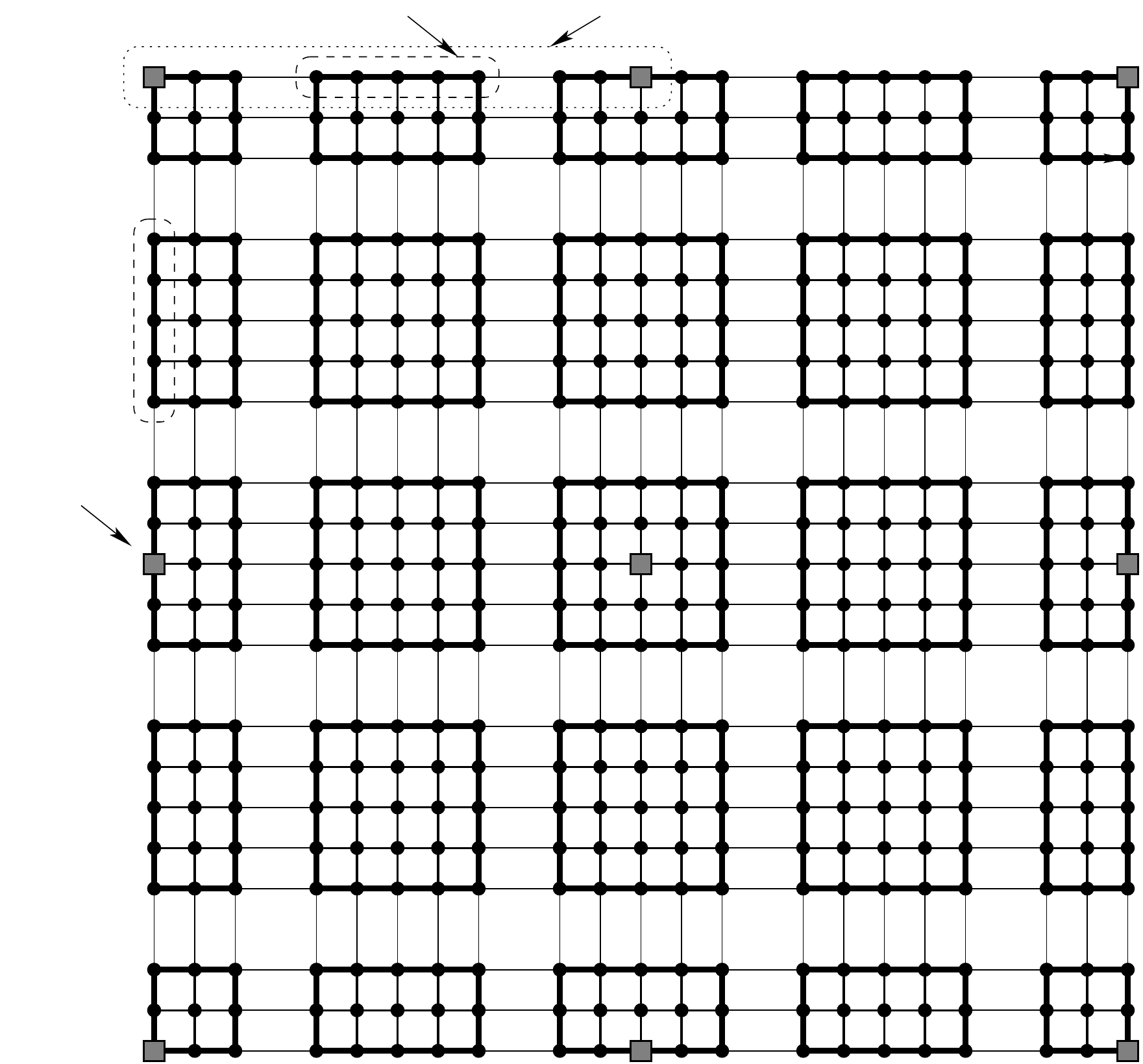_t}}
\end{center}
\caption{An example of the proof of Lemma~\ref{biddslss} for $c=5$, $k=21$, and $k'=3$.}
\label{fig:this2one}
\end{figure}

\begin{eqnarray}
U^{\rm hor}_{i,j} & = & \bigcup_{r\in\{\alpha(i)+(c+1)/2,\ldots,\alpha(i+1)-(c+1)/2\}}(r,\alpha(j)),\label{tlpd}\\
U^{\rm ver}_{i,j} & = & \bigcup_{r\in\{\alpha(j)+(c+1)/2,\ldots,\alpha(j+1)-(c+1)/2\}}(\alpha(i),r)\label{fdsf}
\end{eqnarray}
and let ${\cal U}$ be the collection of all sets $U_{i,j}^{\rm hor}$ or $U_{i,j}^{\rm ver}$ defined in~\eqref{tlpd} and~\eqref{fdsf}.
For every horizontal (resp. vertical) $U\in {\cal U}$, we denote by ${\cal E}(U) \subseteq E(L_k)$ the set containing all horizontal (resp. vertical) edges of $L_{k}$ with an endpoint in $U$.
We will prove the following claim:
\begin{itemize}
\item[($*$)] {\em Let $U_{1}$ and $U_{2}$ be two different sets  of ${\cal U}$ and let $e_1,e_2$ be two edges of $G$ such that $\phi(e_i)\in{\cal E}(U_{i})\cup U_{i}$, for $i=\{1,2\}$. Then, there are no disjoint paths of length at most $c$ from the endpoints of $e_1$ to the endpoints of $e_2$ in $G$.}
\end{itemize}

\noindent Since $L_{k}$ is a distance minor of $G$, it suffices to show that there is no cycle in $L_k$, that contains $\phi(e_1)$ and $ \phi(e_2)$ together with two paths between them of length at most $c$. Let us suppose that such a cycle exists. Notice that by the definition of ${\cal U}$, if two vertices $x,y$ of $V(L_{k})$ belong to two different sets of ${\cal U}$, then ${\bf dist}_{L_{k}}(x,y)\geq c+1$. This implies that $\phi(e_i)$ must be an edge $v_iu_i$ of $L_k$ with only one endpoint, say $u_i$, in $U_i$, for $i=\{1,2\}$, or else we are done. Likewise, it holds that ${\bf dist}_{L_{k}}(u_1,u_2)\geq c+1$ and hence one path of length at most $c$ of the cycle must be from $v_1$ to $u_2$, the other from $v_2$ to $u_1$. It follows, that the edges $v_1u_1$ and $v_2u_2$ cannot be both vertical nor horizontal, and all vertices of the two disjoint paths lie inside the square part of the grid these two edges define. This contradicts the planarity of the grid, which completes the proof of the claim.

For every $i,j\in\{1,\ldots,k'\}^{2}$ we choose arbitrarily a vertex $v_{i,j}$ from the graph $G[\phi^{-1}(\alpha(i),\alpha(j))]$.
This selection creates a collection of $k'\times k'$ vertices of $G$.

For each pair $\{(i,j),(i+1,j)\}$ where $(i,j)\in \{1,\ldots,k'-1\}\times\{1,\ldots,k'\}$,
we observe  that the graph $$G^{\rm hor}_{i,j}=G[\bigcup_{i'\in\{\alpha(i),\ldots,\alpha(i+1)\}}\phi^{-1}(i',\alpha(j))]$$
is connected, and for every $i'\in\{\alpha(i),\ldots,\alpha(i+1)\}$ the sets $\phi^{-1}(i',\alpha(j))$ form a partition of $V(G^{\rm hor}_{i,j})$ and
there is an edge of $G$ between $\phi^{-1}(i',\alpha(j))$ and $\phi^{-1}(i'+1,\alpha(j))$.
From Lemma~\ref{lattgd}, $G^{\rm hor}_{i,j}$ contains
a path $P_{i,j}^{\rm hor}$ from $v_{i,j}$ to $v_{i+1,j}$ with a part of length at least $c+1$ in $\phi^{-1}({\cal E}(U_{i,j}^{\rm hor}))\cup\phi^{-1}(U_{i,j}^{\rm hor})$.
Clearly, one of the edges in this part of the path, say  $e_{i,j}^{\rm hor}$, is an edge  of $\sigma^{-1}(E(H))$.
We denote by $\stackrel{\rightarrow}{P}_{i,j}^{}$ (resp. $\stackrel{\leftarrow}{P}_{i+1,y}^{}$) the part of $P_{i,j}^{\rm hor}$
starting from  $v_{i,y}$  (resp.  $v_{i+1,y}$) and containing only one endpoint of $e_{i,j}^{\rm hor}$.

Working in the same way as before but following the ``vertical'' instead of ``horizontal'' direction,
for each pair $\{(i,j),(i,j+1)\}$ where $(i,j)\in \{1,\ldots,k'\}\times\{1,\ldots,k'-1\}$,
we define the graph
 $$G^{\rm ver}_{i,j}=G[\bigcup_{j'\in\{\alpha(j),\ldots,\alpha(j+1)\}}\phi^{-1}(\alpha(i),j')]$$
and we find the path $P_{i,j}^{\rm ver}$ in it starting from $v_{i,j}$ finishing in $v_{i,j+1}$
and containing an edge $e_{i,j}^{\rm ver}$ of $\sigma^{-1}(E(H))$
that belongs in $\phi^{-1}({\cal E}(U_{i,j}^{\rm ver}))\cup\phi^{-1}(U_{i,j}^{\rm ver})$.
As before, $P_{i,j}^{\rm ver}$ is decomposed to a path ${\downarrow}{P}_{i,j}^{}$ (containing $v_{i,j}$),
the edge $e_{i,j}^{\rm ver}$, and the path
${\uparrow}{P}_{i,j+1}^{}$ (containing $v_{i,j+1}$).
Let, finally, $E^{*}$ be the set containing each $e_{x,y}^{\rm hor}$
and each $e_{x,y}^{\rm ver}$.


From Lemma~\ref{trassnss}, to prove that $L_{k'}$ is a minor of $H$,
it is enough to
define a function $\tau: E(G^{\rm \ell})\rightarrow  V(L_{k'})\cup E(L_{k'})\cup\{\star\}$
certifying that $L_{k'}$ is a minor of $G$ in a way that
$\forall {f\in E(L_{k'})}\ |\tau^{-1}(f)\cap \sigma^{-1}(E(H)|=1$.
For this, for every $(x,y)\in\{1,\ldots,k'\}$
we define $E_{x,y}$ as the union of the edges and the loops of the vertices of every path that exists in the set
 $\{\stackrel{\leftarrow}{P}_{x,y}^{},\stackrel{\rightarrow}{P}_{x,y}^{},{\downarrow}{P}_{x,y}^{},{\uparrow}{P}_{x,y}^{}\}$
 and for each $e\in E_{x,y}$ we set $\tau(e)=(x,y)$.
 Notice that for every $(x,y)\in\{1,\ldots,k'\}^{2}$, $G[\tau^{-1}(x,y)]$ is the union of a set of paths of $G$ having
 a vertex in common, thus it
 induces a connected subgraph of $G$. Let now $e$ be an edge of $L_{k'}$.
 In case  $e=\{(x,y),(x+1,y)\}$ (resp. $e=\{(x,y),(x,y+1)\}$), then, by its definition,
 the edge $e_{x,y}^{\rm hor}$ (reps.  $e_{x,y}^{\rm ver}$) connects an endpoint $v_{1}$ of an edge in $\tau^{-1}(x,y)$ (resp. $\tau^{-1}(x,y)$) with an endpoint
 $v_{2}$ of an edge in $\tau^{-1}(x+1,y)$ (resp. $\tau^{-1}(x,y+1)$). In any case, we set
 $\tau(v_{1}v_{2})=e$. It follows that $\tau(E^{*})=E(L_{k'})$.
For all edges of $G^\ell$ whose image has not been defined so far, we set $\tau(e)=\star$. It is now easy to verify that $\tau$ is a well-defined function and that $L_{k'}$ is a $\tau$-generated minor of  $G$.


Next we prove that $\forall {f\in E(L_{k'})}\ |\tau^{-1}(f)\cap \sigma^{-1}(E(H)|=1$. For this, first of all notice that, by the definition of $\tau$,
all edges in $\tau^{-1}(E(L_{k'}))=E^*$ are edges of $\sigma^{-1}(E(H))$.
Therefore, it suffices to prove that for each $e\in E(H)$,
$\sigma^{-1}(e)$ contains no more than one edge from $E^*$.
Suppose in contrary that $e_{1},e_{2}\in \sigma^{-1}(e)\cap E^*$
and $e_{1}\neq e_{2}$.
As $\sigma(e_{1})=\sigma(e_{2})=e$, it follows that each $e_{i}$ has an endpoint $w_{i}$
in $\sigma^{-1}(w)$ and an endpoint $z_i$ in $\sigma^{-1}(z)$, where $wz=e$. Since each subgraph $G[\sigma^{-1}(w)]$ and $G[\sigma^{-1}(z)]$ is connected, has at most $c$ edges and both are disjoint, there are two disjoint paths of length at most $c$ in $G$ from $w_1$ to $w_2$ and from $z_1$ to $z_2$, a contradiction to ($*$) as $e_1,e_2\in E^*$.
\end{proof}

\begin{lemma}\label{pecm25r}
Let $H_{1}$ and $H_{2}$ be two graphs. Consider a graph $G$ such that $H_1$ is a $c_{1}$-contraction of $G$ and $H_2$ is a $c_{2}$-contraction of $G$. If $H_1$ is planar then $\tw(H_{2})=O(c_{1}\cdot c_{2}\cdot {\bf bg}(H_{2}))=36\cdot (c_{1}+1)\cdot (c_{2}+1)\cdot [{\bf bg}(H_{2})-1]+O(c_{1})$.
\end{lemma}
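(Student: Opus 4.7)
\medskip

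\noindent\textbf{Proof plan.} The strategy is to chain together the previously established results to reduce the bound on $\tw(H_2)$ to a bound on the treewidth of the planar graph $H_1$, which can then be controlled via bidimensionality of planar graphs. The plan has three ingredients. First, since $H_2$ is a $c_2$-contraction of $G$ it is in particular a minor of $G$, so $\tw(H_2)\leq\tw(G)$. Second, Lemma~\ref{gkkkk5g} applied to the $c_1$-contraction $H_1$ of $G$ gives $\tw(G)\leq (c_1+1)(\tw(H_1)+1)-1$. Combining,
\[
\tw(H_2)\leq (c_1+1)(\tw(H_1)+1)-1,
\]
so it suffices to bound $\tw(H_1)$ in terms of $\mathbf{bg}(H_2)$.

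To bound $\tw(H_1)$, I exploit planarity: by Lemma~\ref{hh4rtq}, if $\tw(H_1)\geq 18k$ then $H_1$ contains the $(k\times k)$-grid $L_k$ as a \emph{distance} minor. I then want to invoke Lemma~\ref{biddslss} on $G$ and its $c_2$-contraction $H_2$ to deduce that $H_2$ contains a $(k',k')$-grid as a minor, with $k'=\lfloor(k-1)/(2(c_2+1))\rfloor+1$, which would force $\mathbf{bg}(H_2)\geq k'$ and hence bound $k$ from above. The catch is that Lemma~\ref{biddslss} asks for a distance-grid-minor of $G$, while I only have one in $H_1$, so a transfer step is required.

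The transfer lemma is the main technical point. Let $\sigma: E(G^\ell)\to V(H_1)\cup E(H_1)$ witness $H_1$ as a $c_1$-contraction of $G$ and let $\phi: E(H_1^\ell)\to V(L_k)\cup E(L_k)\cup\{\star\}$ witness $L_k$ as a distance minor of $H_1$. I define a candidate witness $\phi': E(G^\ell)\to V(L_k)\cup E(L_k)\cup\{\star\}$ by setting $\phi'(e)=\phi(\sigma(e))$ whenever $\sigma(e)\in E(H_1)$ and $\phi'(e)=\phi(\ell_{\sigma(e)})$ whenever $\sigma(e)$ is a vertex of $H_1$ (where $\ell_w$ denotes the loop on $w$ in $H_1^\ell$). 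For each $f\in E(L_k)$ I keep exactly one representative in $\phi'^{-1}(f)$, sending the remaining edges of the preimage to $\star$, so that condition~(4) of the minor definition holds. Conditions (1)--(3) then follow from the analogous conditions for $\sigma$ and $\phi$, using that blobs $\sigma^{-1}(w)$ are connected in $G$ and that edges with $\sigma(e)\in E(H_1)$ bridge the corresponding blobs. The crucial condition~(5) is easy: contractions are distance-non-increasing, so $\mathbf{dist}_G(e_1,e_2)\geq\mathbf{dist}_{H_1}(\sigma(e_1),\sigma(e_2))\geq\mathbf{dist}_{L_k}(\phi'(e_1),\phi'(e_2))$, the last inequality being the hypothesis on $\phi$. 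Hence $L_k$ is also a distance minor of $G$, and Lemma~\ref{biddslss} applies.

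Putting everything together: if $\tw(H_1)\geq 18k$ then $\mathbf{bg}(H_2)\geq\lfloor(k-1)/(2(c_2+1))\rfloor+1$. Choosing $k=2(c_2+1)\mathbf{bg}(H_2)+1$ would force $\mathbf{bg}(H_2)\geq\mathbf{bg}(H_2)+1$, a contradiction, so $\tw(H_1)\leq 36(c_2+1)\mathbf{bg}(H_2)+17$. Substituting into the bound from the first paragraph yields
\[
\tw(H_2)\leq 36(c_1+1)(c_2+1)\mathbf{bg}(H_2)+18(c_1+1)-1
= 36(c_1+1)(c_2+1)(\mathbf{bg}(H_2)-1)+O(c_1),
\]
as desired. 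The hardest part is the transfer step, where care is needed to collapse surplus preimages so that the strict multiplicity condition~(4) of the distance-minor definition is preserved; every other step is a direct appeal to an earlier lemma.
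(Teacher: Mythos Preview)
Your proof is correct and follows essentially the same route as the paper's: bound $\tw(H_2)\leq\tw(G)$, push this to $\tw(H_1)$ via Lemma~\ref{gkkkk5g}, invoke Lemma~\ref{hh4rtq} on the planar $H_1$ to get a grid as a distance minor, transfer that distance minor up to $G$, and then apply Lemma~\ref{biddslss} to land a grid minor in $H_2$. The paper handles the transfer step in one line (``As $H_1$ is a contraction of $G$, then also $G$ contains $L_{r'}$ as a distance minor''); you expand it into an explicit composition $\phi'=\phi\circ\sigma$ with a thinning to restore condition~(4), which is a reasonable justification of what the paper asserts. One cosmetic remark: in your last displayed equality the additive error after rewriting $\mathbf{bg}(H_2)$ as $(\mathbf{bg}(H_2)-1)+1$ is really $O(c_1c_2)$ rather than $O(c_1)$; the paper's stated remainder has the same issue, and it does not affect the headline bound $O(c_1c_2\cdot\mathbf{bg}(H_2))$.
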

\begin{proof} Let $H_1,H_2$ be two contractions of $G$ generated by some $\sigma_{i}: E(G^\ell)\rightarrow V(H_i)\cup E(H_i),\ i=1,2$ respectively. Let $r=\tw(H_{2})$. As $G$ contains $H_{2}$ as a contraction, it follows that $\tw(G)\geq r$.
By Lemma~\ref{gkkkk5g}, $\tw(H_{1})\geq (r+1)/(c_{1}+1)-1$.
Since $H_{1}$ is planar, by Lemma~\ref{hh4rtq},  $H_{1}$ contains $L_{r'}$ as a distance minor, where $r'=\lfloor \frac{1}{18}\cdot( \frac{r+1}{c_{1}+1}-1)\rfloor$.
As $H_1$ is a contraction of $G$, then also $G$ contains $L_{r'}$ as a distance minor. By Lemma~\ref{biddslss}, $H_{2}$ contains as a minor an $(r'',r'')$-grid, where $r''=\lfloor\frac{r'-1}{2(c_2+1)}\rfloor+1$, as claimed. \end{proof}

\begin{proof}[{\it Proof of Theorem~\ref{thm:bidi-polyline-intersection}.}]
Given a planar drawing of the lines of ${\cal B}$, consider any crossing $p$, that is a point of the plane that belongs to more than one line. By the assumptions, $p$ belongs to exactly two lines, say $L_1,L_2$ and there is an open disc $D$ of the plane containing $p$, but no lines other than $L_1$ and $L_2$ and no other point that belongs to more than one line. In addition, w.l.o.g. we can always assume that $p$ is not an endpoint of $L_1$ or $L_2$; or else we can stretch inside $D$ the line that ends in $p$ without further altering the setting.

Then, let $G_1$ be the simple graph with an embedding in the plane, in which all endpoints of lines in $\bb$ are vertices of $G_1$ and every line $L\in \bb$ is an edge of $G_1$ joining the two vertices, which are endpoints of $L$. Note that the graph $G_1$ is not necessarily planar -- in fact, any crossing of two lines in $\bb$ is as well a crossing of the corresponding edges of $G_1$ in this embedding.

For every crossing $p$ of two lines $L_1,L_2$ in $\bb$ and hence of the corresponding edges $e_1,e_2$ of $G_1$, we can consider as above an open disc $D$ of the plane in a way, so that $D\cap e=\emptyset$ for any edge $e \in E(G_1)\setminus\{e_1,e_2\}$, the only point in $D$ that belongs to both edges is $p$, no vertex of $G_1$ lies in $D$ and all considered discs are pairwise disjoint. Then, we subdivide $e_1$ and $e_2$ by adding two new vertices $x,y$ in $D\setminus\{p\}$ and we join $x$ and $y$ with a new edge $f$ that lies entirely in the disc $D$ and meets $L_1$ and $L_2$ only at its endpoints. We denote as $M$ the set of these new edges. Notice that we can contract the edge $f$ inside the disc $D$ so that the resulting vertex is the point $p$, leaving the embedding of the graph outside $D$ untouched. By doing so for every edge in $M$, we obtain a planar embedding of a graph. Let $H$ be this graph and let $G$ be the graph before contracting the edges in $M$, i.e. $G/M=H$. Clearly, $H$ is an $1$-contraction of $G$. Moreover, if we contract all edges of $G$ that are not in $M$, we obtain the intersection graph $G_{\cal B}$. Since every edge of $G_1$ was subdivided into at most $\xi+1$ edges of $G$, the graph $G_{\cal B}$ is a $(\xi+1)$-contraction of $G$ and the result follows from Lemma~\ref{pecm25r}.
\end{proof}

\section{\bf Modeling body intersections by intersection of polysegments}

\label{kfofhfm}

Let $\bb=\{B_{1},\ldots,B_{k}\}$ be a collection of 2-dimensional geometric bodies in the plane. We assume that if two bodies do intersect each other, then every connected component of the intersection has a non-empty interior. 
 Our goal is to associate each geometric body $B\in\bb$ with a polysegment $C$ such that the resulting set $\bb'$ of polysegments conveys all necessary information regarding the disposition of the bodies in the plane and their intersections.

For every body $B_i$ let us pick a point of the sphere $p_i$ that lies in $B_i$ and for every body $B_j$ touching $B_i$ ($i\ne j$) in $\bb$, a point $p_{ij}$ that lies in $B_{i}\cap B_{j}$. We can assume without loss of generality, that these points are pairwise distinct and that any three of them are not co-linear. We stress that since $\bb$ is finite this assumption is safe, because we can always consider an open disc $D_p$ of small radius around any given point $p$ of the sphere, such that if $p$ lies in a body $B$ then we can replace $B$ with the possibly expanded body $B'=B\cup D_p$ without altering the intersection graph of $\bb$. Let now ${\cal P}_i$ be the set of all points that contain the index $i$ in the assigned subscript. 
A geometric body $B$ of the sphere is {\em $\rho$-convex}, if for any two points of $B$ there exists a polysegment of length $\rho$ that lies entirely inside $B$ and its endpoints are the given two points. Notice, that the definition of a $\rho$-convex body naturally extends the standard definition of a convex body, which under this new perspective is also called $1$-convex. 

\begin{lemma}\label{lemma:r-convex-polyline}
For any collection of $\rho$-convex bodies $\bb$ on the sphere, there exists a collection of polysegments $\bb'$ and a bijection $\phi:\ \bb\rightarrow\bb'$, such that 
two bodies in $\bb$ touch if and only if the corresponding polysegments in $\bb'$ touch.
Moreover, each polysegment $C\in \bb'$ is crossed by the polysegments from $\bb'\setminus C$ at most $\xi=O(\rho^2\Delta^3)$ times, where $\Delta$ is the maximum degree in the intersection graph $G_{\bb}$ of $\bb$.
\end{lemma}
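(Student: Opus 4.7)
My plan is to realize each body $B_i \in \bb$ by a polysegment $C_i \subseteq B_i$ that passes through every marked point of ${\cal P}_i$, and to set $\phi(B_i) = C_i$. This choice automatically preserves the touching relation: since $C_i \subseteq B_i$ and $C_j \subseteq B_j$ we have $C_i \cap C_j \subseteq B_i \cap B_j$, so disjoint bodies give disjoint polysegments; conversely, when $B_i \cap B_j \neq \emptyset$, the common marked point $p_{ij}$ lies on both $C_i$ and $C_j$, so the two polysegments touch. Thus the construction of the $C_i$'s will imply the first conclusion of the lemma, and only the length and crossing count have to be tracked.

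The construction of each $C_i$ uses the $\rho$-convexity of $B_i$. First note that ${\cal P}_i$ has at most $\Delta+1$ points: the central point $p_i$ together with one intersection point $p_{ij}$ for each of the at most $\Delta$ neighbors of $B_i$ in $G_\bb$. Fix any spanning tree $T_i$ of ${\cal P}_i$ --- for concreteness, the star centered at $p_i$ --- and realize each of its at most $\Delta$ edges as a polysegment of length at most $\rho$ lying entirely inside $B_i$, which is exactly what $\rho$-convexity supplies. What we have now is a tree-shaped polygonal figure inside $B_i$ with $O(\Delta\rho)$ straight segments. To promote it to a single \emph{simple} polysegment that still visits every marked point, I would double each tree edge by an Eulerian traversal of $T_i$ and then push the returning copy of each edge slightly into a thin tubular neighborhood of $T_i$ --- which stays inside $B_i$ because $B_i$ has non-empty interior --- arranging the perturbation so that the resulting curve still passes through each point of ${\cal P}_i$. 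The general-position assumptions of the section (pairwise distinct marked points, no three collinear) let these perturbations be performed transversally, yielding a simple polysegment $C_i$ with $O(\Delta\rho)$ straight segments that contains all of ${\cal P}_i$. This is the main technical obstacle: the naive concatenation is a tree and not a path, and turning it into a non-self-crossing path without losing any marked point requires this Eulerian-plus-perturbation step (or an equivalent tubular-boundary construction).

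For the crossing count, fix a polysegment $C_i \in \bb'$ and observe that any other $C_j$ that meets $C_i$ must satisfy $B_i \cap B_j \neq \emptyset$, because $C_i \subseteq B_i$ and $C_j \subseteq B_j$; hence only the at most $\Delta$ neighbors $j$ of $i$ in $G_\bb$ can contribute. For any such $j$, both $C_i$ and $C_j$ consist of $O(\Delta\rho)$ straight segments, and since two distinct straight segments meet in at most one point, we get $|C_i \cap C_j| = O(\Delta^2\rho^2)$. Summing over the at most $\Delta$ relevant values of $j$ produces the advertised bound $\xi = O(\Delta^3\rho^2)$, completing the proof.
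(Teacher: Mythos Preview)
Your approach is essentially the paper's: realize each body by a polysegment through the marked points ${\cal P}_i$, built from a tree inside $B_i$ using $\rho$-convexity and then doubled into a single simple polysegment; the touching equivalence and the $O(\rho^2\Delta^3)$ crossing count follow exactly as you write. The paper's ``circumscribe around the drawing of this tree a polysegment $C_i$'' is precisely your Eulerian-traversal-with-tubular-perturbation.

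There is one genuine gap. Your star $T_i$ consists of up to $\Delta$ polysegment arms of length $\rho$ emanating from $p_i$, and nothing prevents two of these arms from crossing each other inside $B_i$; $\rho$-convexity only promises \emph{some} polysegment between two points, not one avoiding a prescribed figure. If two arms cross, the Eulerian doubling of the \emph{abstract} tree, drawn along those arms and perturbed in a tubular neighbourhood, still self-crosses at those points, so $C_i$ is not a polysegment in the paper's sense. Resolving such crossings by subdivision could produce $O(\rho^2\Delta^2)$ segments per body and hence an $O(\rho^4\Delta^5)$ crossing bound, which is too weak. The paper avoids this by building the tree incrementally and crossing-free: when the $\rho$-polysegment from a new point first meets the current drawing, attach it there and discard the remainder. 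This adds at most $\rho$ segments per point and keeps the drawing planar, so the total length stays $O(\rho\Delta)$ and your circumscription step then goes through verbatim.
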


\begin{proof}
Let us then consider such a collection $\bb$ of $\rho$-convex bodies $\{B_{1},\ldots,B_{k}\}$ where any body in $\bb$ touches at most $\Delta$ other bodies, for some positive integers $\rho$ and $\Delta$. 
Clearly $|{\cal P}_i|\leq\Delta+1$, for $1\leq i\leq k$. 
Notice that as a body $B_i$ of $\bb$ is $\rho$-convex, for any two points in ${\cal P}_i$ there is a polysegment of length at most $\rho$ that joins them. We create the following drawing of a polysegment. 

Pick two points in ${\cal P}_i$ and join them with a polysegment that lies in $B_i$. As long as there are still points in ${\cal P}_i$ we did not join, pick one and join it to a ${\cal P}_i$ point or bend point of the so far drawing, so that the new polysegment does not cross the drawing. To see this is possible, consider that if the polysegment crosses a straight line segment of the drawing so far, we can simply replace it by one that joins the new point of ${\cal P}_i$ to a bend point that lies on the border of the straight line segment in question. 
In the end we obtain the drawing of a tree on the sphere containing all points in ${\cal P}_i$. 
At each step we joined at least one new point in ${\cal P}_i$, and in doing so we introduced at most $\rho-1$ new bend points. 
It follows that the drawing has in total at most $\rho\cdot\Delta+1$ bend and $P_i$ points. 

Finally, we circumscribe around the drawing of this tree a polysegment $C_i$ which contains all points in ${\cal P}_i$,
so that to every straight line segment of the tree correspond two straight line segments of $C_i$, and any bend point of $C_i$ lies in an open disc of small radius around the corresponding bend point of the drawing of the tree.
Clearly the polysegment $C_i$ has length at most $2\rho\cdot\Delta$.

We have thus showed, that for any $\rho$-convex body $B_i\in \bb$, there exists a polysegment $C_i$ of length $O(\rho\cdot\Delta)$, that lies entirely inside $B_i$ and contains all points in ${\cal P}_i$. The mapping of a body $B_i$ to a polysegment $C_i$ defines a bijection $\phi$ between the elements of the collection $\bb$ and the collection of polysegments $\bb'=\{C_i:1\leq i\leq k\}$. 
By construction, two distinct polysegments $C_i,C_j$ both contain the point $p_{ij}$ and hence they do share a point, if the corresponding bodies $B_i,B_j$ touch.
On the other hand, as the polysegments lie inside the geometrical bodies, the polysegments $C_i,C_j$ share a point of the sphere only if the associated bodies touch. This also implies that a polysegment $C_i$ shares at most $O((\rho\cdot\Delta)^2\cdot\Delta)=O(\rho^2\cdot\Delta^3)$ points of the sphere with other polysegments in $\bb'$, concluding the proof. 
\end{proof}

Straightforwardly applying Theorem~\ref{thm:bidi-polyline-intersection} to the sets of polysegments constructed in Lemma~\ref{lemma:r-convex-polyline}, results to the following theorem for $\rho$-convex geometric bodies.

\begin{theorem}\label{thm:r-convex}
Let ${\cal B}$ be a set of $\rho$-convex bodies such that for each $B_{1},B_{2}\in {\cal B}$ with $B_{1}\cap B_{2}\neq\emptyset$, the set $B_{1}\cap B_{2}$ has non-empty interior. Let $G_{\cal B}$ be the intersection graph of ${\cal B}$ and let $\Delta$ be the maximum degree of $G_{\cal B}$. Then $\tw(G_{\cal B})=O(\rho^2\Delta^3\cdot {\bf bg}(G_{\cal B}))$.
\end{theorem}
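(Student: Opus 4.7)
The plan is to derive this theorem as an essentially immediate consequence of Lemma~\ref{lemma:r-convex-polyline} combined with Theorem~\ref{thm:bidi-polyline-intersection}. The lemma converts the geometric problem about $\rho$-convex bodies into an equivalent combinatorial problem about polysegments, to which the earlier bidimensionality theorem for line intersection graphs applies directly.

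First I would apply Lemma~\ref{lemma:r-convex-polyline} to the collection $\bb$ to obtain a collection $\bb'$ of polysegments together with a bijection $\phi\colon \bb\to \bb'$ that preserves touching: $B_1\cap B_2\neq\emptyset$ precisely when $\phi(B_1)\cap\phi(B_2)\neq\emptyset$. This identification gives an isomorphism of intersection graphs $G_{\bb}\cong G_{\bb'}$, so in particular $\tw(G_{\bb})=\tw(G_{\bb'})$ and ${\bf bg}(G_{\bb})={\bf bg}(G_{\bb'})$. The same lemma delivers the quantitative crossing bound $\xi=O(\rho^2\Delta^3)$, meaning that every polysegment in $\bb'$ is crossed by the remaining polysegments at most $\xi$ times.

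Second, I would invoke Theorem~\ref{thm:bidi-polyline-intersection} on $\bb'$. The hypotheses of that theorem — pairwise intersections being finite sets of points and no three polysegments sharing a common crossing — follow from the construction in Lemma~\ref{lemma:r-convex-polyline}: the marker points ${\cal P}_i$ are chosen pairwise distinct and in general position, and any residual higher-order coincidence between the constructed polysegments can be eliminated by a local perturbation performed within the host body, which changes neither $G_{\bb'}$ nor the crossing count. Theorem~\ref{thm:bidi-polyline-intersection} then yields
\[
\tw(G_{\bb'})=O(\xi\cdot {\bf bg}(G_{\bb'}))=O(\rho^2\Delta^3\cdot {\bf bg}(G_{\bb'})),
\]
and substituting back through the isomorphism $G_{\bb}\cong G_{\bb'}$ gives the stated bound.

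The only step that requires genuine care is verifying that the polysegments produced by Lemma~\ref{lemma:r-convex-polyline} do satisfy the general-position hypotheses of Theorem~\ref{thm:bidi-polyline-intersection}; this is the main (and essentially only) technical obstacle. It is a purely local geometric perturbation argument which leaves both the intersection graph and the $O(\rho^2\Delta^3)$ crossing bound intact, after which the treewidth inequality is obtained by direct substitution.
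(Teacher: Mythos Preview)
Your proposal is correct and matches the paper's approach exactly: the paper states that Theorem~\ref{thm:r-convex} follows by ``straightforwardly applying Theorem~\ref{thm:bidi-polyline-intersection} to the sets of polysegments constructed in Lemma~\ref{lemma:r-convex-polyline}.'' Your additional remark about the general-position hypotheses being handled by local perturbation is a detail the paper leaves implicit, but it is in the same spirit.
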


Given a positive real number $\alpha $, we define the class of 
{\em $\alpha $-fat convex intersection graphs} as the class 
containing an intersection graph $G_{\cal B}$ of a collection $\bb$ of convex bodies, 
if the ratio between the maximum and the minimum radius of a circle 
where all objects in ${\cal B}$ can be circumscribed, and inscribed
respectivelly,  is upper bounded by $\alpha$.
The following lemma describes the manner in which the convex bodies of such a collection are being modeled by polysegments. 

\begin{lemma}\label{lemma:fat-convex-polyline}
Let $H$ be a graph on $h$ vertices and let $\bb$ be a collection of convex bodies on the sphere. If the intersection graph of $\bb$ is $\alpha$-fat and does not contain graph $H$ as a subgraph, then there exists a collection of polysegments $\bb'$ and a bijection $\phi:\ \bb\rightarrow\bb'$ 
such that two bodies in $\bb$ touch if and only if the corresponding polysegments in $\bb'$ touch.
Moreover, each polysegment $C\in \bb'$ is crossed by the polysegments from $\bb'\setminus C$ at most $\xi=O(\alpha^6\cdot h^3)$ times.
\end{lemma}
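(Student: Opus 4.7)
My plan is to invoke the polysegment construction from the proof of Lemma~\ref{lemma:r-convex-polyline} with $\rho=1$, since every convex body is in particular $1$-convex. That construction simultaneously produces the collection $\bb'$ and the bijection $\phi$ for which two bodies of $\bb$ touch if and only if the corresponding polysegments of $\bb'$ touch, so all that remains is to establish the crossing bound. The construction assigns to each $B_i\in\bb$ a polysegment $C_i\subseteq B_i$ of length $O(\Delta)$ containing ${\cal P}_i$, where $\Delta$ denotes the maximum degree of $G_{\bb}$, and the counting at the end of Lemma~\ref{lemma:r-convex-polyline} bounds the number of points that any $C_i$ shares with polysegments of $\bb'\setminus\{C_i\}$ by $O(\Delta^2\cdot\Delta)=O(\Delta^3)$. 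It therefore suffices to prove $\Delta=O(\alpha^2\cdot h)$, from which the desired bound $\xi=O(\alpha^6\cdot h^3)$ is immediate.

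To bound $\Delta$, note first that $H$ has $h$ vertices and is thus a subgraph of $K_h$, so the hypothesis that $G_{\bb}$ is $H$-subgraph-free implies that $G_{\bb}$ is $K_h$-free. Let $r_{\min}$ be the minimum radius of an inscribed circle across the bodies of $\bb$. By $\alpha$-fatness, every $B\in\bb$ admits an inscribed circle of radius at least $r_{\min}$ and a circumscribed circle of radius at most $\alpha\cdot r_{\min}$. Fix a body $B$ with circumscribing center $c_B$. Any neighbor $B_i$ of $B$ intersects $B$, so its own circumscribing center lies at distance at most $R_B+R_i\leq 2\alpha r_{\min}$ from $c_B$, and therefore $B_i$ is contained in the disk $\Omega$ of radius $3\alpha r_{\min}$ centered at $c_B$.

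The geometric heart of the argument is a cell decomposition of $\Omega$ into $O(\alpha^2)$ cells, each of diameter strictly less than $2r_{\min}$ --- for instance, an axis-aligned grid of squares of side slightly smaller than $r_{\min}\sqrt{2}$, clipped to $\Omega$. For each neighbor $B_i$ of $B$, assign $B_i$ to the cell containing the center $c_i$ of its inscribed disk (necessarily a point of $\Omega$). If two neighbors $B_i,B_j$ fall in the same cell, then $\|c_i-c_j\|< 2r_{\min}$, so the inscribed disks of radii at least $r_{\min}$ around $c_i$ and $c_j$ overlap and hence $B_i\cap B_j\neq\emptyset$. Consequently the neighbors of $B$ that share a cell, together with $B$ itself, induce a clique in $G_{\bb}$; by $K_h$-freeness each cell contributes at most $h-1$ such neighbors, so $|N_{G_{\bb}}(B)|=O(\alpha^2)\cdot (h-1)=O(\alpha^2\cdot h)$ and $\Delta=O(\alpha^2 h)$ as needed. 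The only obstacle I anticipate is not conceptual but arithmetical: a careful bookkeeping of the various ``$r_{\min}$ versus $\alpha r_{\min}$'' estimates, so that the cell-diameter bound is really $<2r_{\min}$ (to guarantee intersection, not mere tangency, of the inscribed disks) and the cell count really is $O(\alpha^2)$ --- both being routine area comparisons --- after which the rest is a direct application of the machinery already set up in Lemma~\ref{lemma:r-convex-polyline}.
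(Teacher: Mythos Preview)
Your argument is correct and follows the same overall structure as the paper: reduce to the polysegment construction for $1$-convex bodies, obtain the $O(\Delta^3)$ crossing bound, and then show $\Delta=O(\alpha^{2}h)$ by exhibiting a large clique. The only difference is in how the degree bound is obtained: the paper uses a direct area-pigeonhole argument (the inscribed disks of the $>16\alpha^{2}h$ neighbors have total area exceeding $h$ times the area of the containing ball of radius $4R$, so some point lies in $h{+}1$ bodies), whereas you partition the containing disk into $O(\alpha^{2})$ cells of diameter $<2r_{\min}$ and bound each cell's contribution by the clique constraint. Both are standard packing arguments yielding the same bound; the paper's version is marginally shorter, while yours makes the clique structure more explicit. (A tiny bookkeeping slip: with $B$ included, $K_h$-freeness gives at most $h-2$ neighbors per cell rather than $h-1$, but this is irrelevant inside the $O(\cdot)$.)
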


\begin{proof}
Let us then consider such a collection $\bb$ of convex bodies $\{B_{1},\ldots,B_{k}\}$ where the intersection graph $G_\bb$ is $\alpha$-fat and has maximum degree $\Delta$, for a positive integer $\Delta$ and a positive real $\alpha$. 
It follows that for a set of points ${\cal P}_i$ corresponding to a body $B_i$ of $\bb$, it holds that  $|{\cal P}_i|\leq\Delta+1$. As the body $B_i$ is convex, there exists a polysegment $C_i$ of length $|{\cal P}_i|-1$ containing all points in ${\cal P}_i$, that lies entirely in $B_i$. Let $\bb'$ be the collection of the polysegments $\{C_i:1\leq i\leq k\}$. Clearly, two distinct polysegments of $\bb'$ share a point of the sphere if and only if the corresponding bodies touch. Furthermore, a polysegment has at most $O(\Delta^3)$ common points with other polsegments in $\bb'$. 

We claim that $\Delta\leq16\alpha^2\cdot h$, which directly implies the bound of the lemma.
To contradiction, assume that $G_\bb$ has a vertex with degree more than $16h\alpha^2$. Then, for the corresponding body $B$ of the collection $\bb$, pick an arbitrary point $x\in B$. Since each object in $\mathcal{B}$ can be inscribed in a circle of radius $R$, all bodies intersecting $B$ belong to the ball $C$ of radius $4R$ with the center in $x$. This ball has area $16{\rm {\rm \pi}} R^2$. Since each object in $\mathcal{B}$ contains a circle of radius $r$, the sum of the areas of the bodies intersecting $B$ exceeds $16h\alpha^2\times {\rm {\rm \pi}} r^2=16 {\rm {\rm \pi}} h R^2$. Therefore, there is a point in the ball $C$ belonging to at least $h+1$ bodies from $\mathcal{B}$. Thus, $G_\mathcal{B}$ contains a clique of size $h+1$ that contradicts to the assumption that it is $H$-free.
\end{proof}

Again, by straightforwardly applying Theorem~\ref{thm:bidi-polyline-intersection} to the sets of polysegments constructed in Lemma~\ref{lemma:fat-convex-polyline}, we derive an improved theorem for $H$-free $\alpha$-fat convex intersection graphs of geometric bodies.
\begin{theorem}\label{thm:a-fat}
Let $H$ be a graph on $h$ vertices, and let $\bb$ be a collection of convex bodies on the sphere such that for each $B_{1},B_{2}\in {\cal B}$ with $B_{1}\cap B_{2}\neq\emptyset$, the set $B_{1}\cap B_{2}$ has a non-empty interior. If the intersection graph $G_{\bb}$ of $\bb$ is $\alpha$-fat and does not contain $H$ as a subgraph, then $\tw(G_{\cal B})=O(\alpha^6 h^3\cdot {\bf bg}(G_{\cal B}))$.
\end{theorem}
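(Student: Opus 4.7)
The plan is to derive Theorem~\ref{thm:a-fat} as a direct corollary of Theorem~\ref{thm:bidi-polyline-intersection}, using Lemma~\ref{lemma:fat-convex-polyline} as a geometric reduction. First I would invoke Lemma~\ref{lemma:fat-convex-polyline} to obtain a collection $\bb'$ of polysegments and a bijection $\phi:\bb\to\bb'$ such that, for all $B_1\neq B_2$ in $\bb$, the bodies $B_1$ and $B_2$ touch if and only if the polysegments $\phi(B_1)$ and $\phi(B_2)$ touch. Moreover, the same lemma guarantees that every polysegment $C\in\bb'$ is crossed by the rest of $\bb'$ at most $\xi=O(\alpha^6 h^3)$ times. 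The bijection $\phi$ is in fact a graph isomorphism between $G_{\bb}$ and $G_{\bb'}$, so in particular $\tw(G_{\bb})=\tw(G_{\bb'})$ and $\bw(G_{\bb})=\bw(G_{\bb'})$.

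Next I would apply Theorem~\ref{thm:bidi-polyline-intersection} to $\bb'$. This requires verifying the two hypotheses of that theorem, namely that pairwise intersections of polysegments are finite point sets and that no three polysegments share a common point. Both follow from the construction in Lemma~\ref{lemma:fat-convex-polyline}: each polysegment $C_i$ lies in the convex body $B_i$ and passes through the chosen points $p_{ij}\in B_i\cap B_j$, and those points were selected to be pairwise distinct and in general position. Any accidental collinearities or multiple crossings can be removed by a standard infinitesimal perturbation argument, analogous to the one performed at the beginning of the proof of Theorem~\ref{thm:bidi-polyline-intersection}, without affecting either the intersection graph or the bound on $\xi$. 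Having verified the hypotheses, Theorem~\ref{thm:bidi-polyline-intersection} yields
\[
\tw(G_{\bb'})=O(\xi\cdot {\bf bg}(G_{\bb'}))=O(\alpha^6 h^3\cdot {\bf bg}(G_{\bb'})),
\]
and via the isomorphism $G_{\bb}\cong G_{\bb'}$ this transfers to the required bound on $\tw(G_{\bb})$.

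The only genuinely nontrivial step is justifying that the polysegment construction of Lemma~\ref{lemma:fat-convex-polyline} can be made to satisfy the general-position hypotheses of Theorem~\ref{thm:bidi-polyline-intersection}, but since all polysegments have finitely many bends and lie in bounded bodies, a generic perturbation of the straight-line pieces accomplishes this; no further ideas are required. Once this is observed, the theorem follows by composing the two results and the proof reduces to essentially a single substitution of the parameter $\xi=O(\alpha^6 h^3)$ into the bound of Theorem~\ref{thm:bidi-polyline-intersection}.
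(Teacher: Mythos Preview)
Your proposal is correct and follows essentially the same route as the paper, which simply states that Theorem~\ref{thm:a-fat} is obtained by applying Theorem~\ref{thm:bidi-polyline-intersection} to the polysegment collection produced by Lemma~\ref{lemma:fat-convex-polyline}. If anything, you are more explicit than the paper in checking the general-position hypotheses of Theorem~\ref{thm:bidi-polyline-intersection} via a perturbation argument; the paper leaves this implicit.
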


\section{Conclusions and further research}

\label{kdhufkl}

We believe that the applicability of our combinatorial 
results is even wider than what is explained in the previous section.
The main combinatorial engine of this paper is Lemma~\ref{pecm25r}
that essentially induces an edit-distance notion between graphs under contractibility. 
This is materialized by the following definition.

\begin{definition}
Let $G_{1}$ and $G_{2}$ be graphs. We define the 
{\em contraction-edit distance between} $G_{1}$ and $G_{2}$,
denoted by ${\bf cdist}(G_{1},G_{2})$, 
as the minimum
$c$ for which there exists a graph that contains 
both $G_{1}$ and $G_{2}$ as $c$-contractions. 
Given a graph $G$ we define ${\cal B}^{c}(G)=\{H\mid {\bf cdist}(G,H)\leq c\}$.
Finally, given a graph class ${\cal G}$, we define ${\cal B}^{c}({\cal G})=\bigcup_{G\in {\cal G}}{\cal B}^{c}(G)$.
We refer to the class ${\cal B}^{c}({\cal G})$ as the {\cal $c$-contraction extension} of the class ${\cal G}$.
\end{definition}
%
%
%
%
%

A direct consequence of Lemma~\ref{pecm25r} is the following:

\begin{corollary}
\label{lk3ers5g67}
Let ${\cal P}$ be the class of planar graphs, then for every fixed constant $c$, 
${\cal B}^{c}({\cal P})$ satisfies~\eqref{gjjfkr43d}.
\end{corollary}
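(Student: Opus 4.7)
The plan is to derive the statement as an immediate consequence of Lemma~\ref{pecm25r}, which was engineered precisely for this kind of conclusion. The key observation is that the definition of ${\bf cdist}(G_1,G_2)\le c$ provides exactly the hypothesis that Lemma~\ref{pecm25r} requires: the existence of a single ambient graph that has both $G_1$ and $G_2$ as $c$-contractions.

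First, I would take an arbitrary $H\in {\cal B}^{c}({\cal P})$ and unfold the definitions. By the definition of ${\cal B}^{c}({\cal P})=\bigcup_{P\in{\cal P}}{\cal B}^{c}(P)$, there is some planar graph $P$ with $H\in{\cal B}^{c}(P)$, i.e.\ ${\bf cdist}(P,H)\le c$. By the definition of the contraction-edit distance, this in turn means that there exists a graph $A$ such that both $P$ and $H$ are $c$-contractions of $A$.

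Second, I would apply Lemma~\ref{pecm25r} to this configuration, with the assignment $H_1:=P$, $H_2:=H$, $G:=A$, $c_1:=c$, and $c_2:=c$. The planarity hypothesis on $H_1$ is satisfied since $P\in{\cal P}$, and the two contraction hypotheses are exactly what the previous step produced. The conclusion of the lemma then yields $\tw(H)=O(c^2\cdot {\bf bg}(H))$, and since $c$ is a fixed constant this collapses to $\tw(H)=O({\bf bg}(H))$. As $H$ was an arbitrary element of ${\cal B}^{c}({\cal P})$, this is precisely property~\eqref{gjjfkr43d} for the class ${\cal B}^{c}({\cal P})$.

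There is essentially no obstacle here: all of the combinatorial content—relaying the bound on treewidth to the grid-minor number through the distance-minor notion developed in Lemmata~\ref{biddslss} and~\ref{trassnss}—has already been absorbed into Lemma~\ref{pecm25r}. The only work in the corollary is a matter of unpacking one layer of definitions. Conceptually, the corollary makes explicit the fact that, for any fixed $c$, the entire class ${\cal B}^{c}({\cal P})$ is admissible for the bidimensionality meta-theorems, and the earlier theorems of Section~\ref{kfofhfm} can be read uniformly as asserting that each geometric intersection graph class under consideration sits inside ${\cal B}^{c}({\cal P})$ for a value of $c$ controlled by the geometric parameters ($\xi$, $\rho\Delta$, or $\alpha h$).
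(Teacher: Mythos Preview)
Your proposal is correct and matches the paper's own treatment: the paper states the corollary as ``a direct consequence of Lemma~\ref{pecm25r}'' without further proof, and your argument is precisely the unpacking of definitions followed by an application of Lemma~\ref{pecm25r} with $H_1=P$, $H_2=H$, $c_1=c_2=c$.
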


Actually, Corollary~\ref{lk3ers5g67} can be extended much further than planar graphs.
For this, the only we need analogues of Lemma~\ref{hh4rtq} for more general 
graph classes. Using the main result of~\cite{FominGT11cont}, it follows 
that Lemma~\ref{hh4rtq} is qualitatively correct for every graph class that excludes an apex graph as a minor
(an apex graph is a graph that can become planar after the removal of a vertex).
By plugging this more general version of  Lemma~\ref{hh4rtq}  to the proofs of the previous 
section we obtain the following.

\begin{theorem}
\label{lk3ers5g}
Let $H$ be an apex-minor free graph and let ${\cal G}_{H}$ be the class of  graphs
excluding $H$ as a minor. Then, for every fixed constant $c$, the class 
${\cal B}^{c}({\cal G}_{H})$ satisfies~\eqref{gjjfkr43d}.
\end{theorem}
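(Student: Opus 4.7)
The plan is to replicate the argument of Lemma~\ref{pecm25r} nearly verbatim, replacing the appeal to Lemma~\ref{hh4rtq} by its analogue for graph classes excluding an apex minor. The first task is therefore to establish the following qualitative generalization of Lemma~\ref{hh4rtq}: for every apex graph $H$ there is a constant $c_{H}$ such that every $H$-minor-free graph $G$ with $\tw(G)\geq c_{H}\cdot k$ contains the $(k\times k)$-grid as a distance minor. The proof of Lemma~\ref{hh4rtq} produces a distance minor out of a contraction of a suitably large partial triangulation of the grid supplied by Proposition~\ref{propo}; the analogous step for apex-minor-free classes is exactly the grid-contraction theorem of~\cite{FominGT11cont}, and the same ``pick every second row/column'' selection, applied to a $4k \times 4k$ grid contraction, upgrades the contraction to a distance minor with only a constant-factor loss in $k$.

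With this ingredient in hand, fix $G \in {\cal B}^{c}({\cal G}_{H})$. By the definition of contraction-edit distance, there exist a graph $A$ and a graph $G'\in {\cal G}_{H}$ such that both $G$ and $G'$ are $c$-contractions of $A$. Setting $r := \tw(G)$, the fact that $G$ is a contraction of $A$ yields $\tw(A)\geq r$, and applying Lemma~\ref{gkkkk5g} to the pair $(A,G')$ gives $\tw(G')\geq (r+1)/(c+1)-1$. Feeding the $H$-minor-free graph $G'$ into the generalized Lemma~\ref{hh4rtq} above, we obtain that $G'$ contains $L_{r'}$ as a distance minor for some $r' = \Omega(r/c)$, with the hidden constant depending on $H$. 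Since contracting edges only decreases distances between preimages, the distance-minor relation lifts from $G'$ to $A$, so $L_{r'}$ is also a distance minor of $A$.

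Finally, Lemma~\ref{biddslss} applied to $A$ and its $c$-contraction $G$ produces $L_{r''}$ as a minor of $G$ with $r''=\Omega(r'/c)=\Omega(r/c^{2})$. Hence ${\bf bg}(G)\geq r''=\Omega(\tw(G))$, which is precisely property~\eqref{gjjfkr43d} (with constants depending only on $c$ and $H$, both of which are fixed).

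The main obstacle is the first step: properly upgrading the main theorem of~\cite{FominGT11cont} from a grid \emph{contraction} to a grid \emph{distance minor} with only a linear loss. The combinatorics needed is the same ``spread out the branch sets of the contracted grid over a larger contracted grid'' trick used in Lemma~\ref{hh4rtq} for planar graphs, but one must verify that it transfers without issue to contractions coming from~\cite{FominGT11cont}; modulo this verification, the remainder of the argument is a direct transcription of the proof of Lemma~\ref{pecm25r}, with ``planarity'' replaced by ``being $H$-minor-free for an apex graph $H$''.
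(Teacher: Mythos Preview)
Your proposal is correct and follows essentially the same route as the paper: the paper itself does not give a detailed proof, but explicitly states that one obtains the theorem by replacing Lemma~\ref{hh4rtq} with its qualitative analogue for apex-minor-free classes (derived from~\cite{FominGT11cont}) and then ``plugging this more general version of Lemma~\ref{hh4rtq} to the proofs of the previous section'', i.e., rerunning the proof of Lemma~\ref{pecm25r}. Your outline does precisely this, and you correctly flag that the only nontrivial step is upgrading the grid contraction from~\cite{FominGT11cont} to a grid distance minor via the same subgrid-selection trick used in Lemma~\ref{hh4rtq}.
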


All the algorithmic 
applications of this paper follow by the fact that all geometric intersection 
graph classes 
considered in this paper are subsets of ${\cal B}^{c}({\cal P})$ for some 
choice of $c$. Clearly, Theorem~\ref{lk3ers5g} offers a much more wide framework for this,
including graphs of bounded genus (including intersection graphs of lines or polygons 
on surfaces), graphs excluding a single-crossing graph, and 
$K_{3,r}$-minor free graphs.
We believe  that Theorem~\ref{lk3ers5g}, that is the 
most general combinatorial extension of our results may have 
applications to more general combinatorial objects than just intersection 
graph classes. We leave this question open for further research.

\subsection*{Acknowledgements}  The authors wish to thank all the anonymous reviewers for 
their valuable comments on previous versions of this paper.

%
%

\end{document}